  \providecommand\BibTeX{{%
    \normalfont B\kern-0.5em{\scshape i\kern-0.25em b}\kern-0.8em\TeX}}}
\newtheorem{theorem}{Theorem}
\newtheorem{lemma}{Lemma}
\newtheorem{corollary}{Corollary}[theorem]
\DeclareMathOperator*{\argmax}{arg\,max}
\begin{document}

%%
%% The "title" command has an optional parameter,
%% allowing the author to define a "short title" to be used in page headers.
\title{Rational Silence and False Polarization: How Viewpoint Organizations and Recommender Systems Distort the Expression of Public Opinion}

%%
%% The "author" command and its associated commands are used to define
%% the authors and their affiliations.
%% Of note is the shared affiliation of the first two authors, and the
%% "authornote" and "authornotemark" commands
%% used to denote shared contribution to the research.
 \author{Atrisha Sarkar}
 \affiliation{Schwartz Reisman Institute for Technology and Society \\ Vector Institute  \\ \institution{University of Toronto }
 \country{Canada}}
 \email{atrisha.sarkar@utoronto.ca}

 \author{Gillian K. Hadfield}
 \affiliation{Johns Hopkins School of Government and Policy and Whiting School of Engineering (Computer Science)\\ 
 \institution{Johns Hopkins University }
 \country{USA}}
 \email{ghadfield@jhu.edu}

%% article.
\begin{abstract}
  
Social media platforms are one of the most important domains in which artificial intelligence (AI) has already transformed the nature of economic and social interaction. AI enables the massive scale and highly personalized nature of online information sharing that we now take for granted. Extensive attention has been devoted to the polarization that social media platforms appear to facilitate. However, a key implication of the transformation we are experiencing due to these AI-powered platforms has received much less attention: how platforms impact what observers of online discourse come to believe about community views. These observers include policymakers and legislators, who look to social media to gauge the prospects for policy and legislative change, as well as developers of AI models trained on large-scale internet data, whose outputs may similarly reflect a distorted view of public opinion. In this paper, we present a nested game-theoretic model to show how observed online opinion is produced by the interaction of the decisions made by users about whether and with what rhetorical intensity to share their opinions on a platform, the efforts of viewpoint organizations (such as traditional media and advocacy organizations) that seek to encourage or discourage opinion-sharing online, and the operation of AI-powered recommender systems controlled by social media platforms. We show that signals from ideological viewpoint organizations encourage an increase in rhetorical intensity, leading to the \emph{rational silence} of moderate users. This, in turn, creates a polarized impression of where average opinions lie. We also show that this observed polarization can also be amplified by recommender systems that, pursuant to a platform's incentive to maximize engagement, encourage the formation of viewpoint communities online that end up seeing a skewed sample of opinion. Unlike existing models, these well-known online phenomena are not here attributed to distortion in the formation of opinions nor to the seeking out of like-minded others, but rather to the interaction of the incentives of users, viewpoint organizations, and platforms implementing recommender systems. In addition to showing how these interactions can play out in simulations, we also identify practical strategies platforms can implement, such as reducing exposure to signals from ideological viewpoint organizations and a tailored approach to content moderation. 

\end{abstract}

%%
%% The code below is generated by the tool at http://dl.acm.org/ccs.cfm.
%% Please copy and paste the code instead of the example below.
%%
%\begin{CCSXML}
%<ccs2012>
%   <concept>
%       <concept_id>10010405.10010455</concept_id>
%       <concept_desc>Applied computing~Law, social and behavioral sciences</concept_desc>
%       <concept_significance>500</concept_significance>
%       </concept>
%   <concept>
 %      <concept_id>10010147.10010178</concept_id>
  %     <concept_desc>Computing methodologies~Artificial intelligence</concept_desc>
   %    <concept_significance>500</concept_significance>
 %      </concept>
 %  <concept>
 %      <concept_id>10010147.10010341</concept_id>
 %      <concept_desc>Computing methodologies~Modeling and simulation</concept_desc>
 %      <concept_significance>300</concept_significance>
 %      </concept>
 %</ccs2012>
%\end{CCSXML}

%\ccsdesc[500]{Applied computing~Law, social and behavioral sciences}
%\ccsdesc[500]{Computing methodologies~Artificial intelligence}
%\ccsdesc[300]{Computing methodologies~Modeling and simulation}

%%
%% Keywords. The author(s) should pick words that accurately describe
%% the work being presented. Separate the keywords with commas.
\keywords{social media, recommender systems, affective polarization, online polarization, artificial intelligence}

%% A "teaser" image appears between the author and affiliation
%% information and the body of the document, and typically spans the
%% page.

%\received{20 February 2007}
%\received[revised]{12 March 2009}
%\received[accepted]{5 June 2009}

%%
%% This command processes the author and affiliation and title
%% information and builds the first part of the formatted document.

\maketitle
%\section*{Abstract}
\section{Introduction}

Artificial intelligence (AI) is transforming the way communities learn about the views of their citizens. Without AI, we could not sustain the huge volume of viewpoint exchange on social media. AI-based recommender systems allow platforms to autonomously curate what will be served to an individual from a massive dynamic inventory of content, using fine-grained information about each user to curate a micro-targeted pinhole view of the world that can differ dramatically from person to person. This new landscape has transformed the public sphere. Gone is the era of one-to-many broadcasts of viewpoints by a limited number of broadcasters and publishers, replaced by a complex and dynamic media ecosystem. This transformation, mediated by recommender systems and online platforms, has important implications for our political economy. To the extent that policymakers, legislators, and courts look to social media to gauge public opinion on contentious issues, they may well be misled about where true public opinion lies. They may well come to believe that views are more polarized and extreme than they really are, and respond accordingly. This distortion in beliefs about public opinion then has the potential to distort policy and collective choice. Moreover, these dynamics also have implications for AI model training. Since large-scale internet data is commonly used to train AI models, any divergence between online and true public opinion could result in model behavior that is similarly unrepresentative. Thus, we need more robust theoretical frameworks to understand why and how opinions expressed in the digital public sphere may differ from representative public opinion. \par
The phenomena of the digital public sphere have attracted a great deal of attention from researchers across various disciplines, including economics, computer science, psychology, sociology, and political science \cite{nyhan2023like}. Among these, polarization has become one of the most discussed topics in both academic and popular discourse \cite{lorenz2023systematic, arora2022polarization}. Social media platforms that gained early prominence were applications that algorithmically connected users to friends, family, and strangers, and formed virtual communities \citep{aichner2021twenty}. For the most part, the analysis of polarization was based on the formation of echo chambers through these communities, patterns of online interactions that reinforce users’ beliefs and opinions by isolating them from opposing points of view \cite{golub2012homophily, garimella2018political}. However, simplistic attempts to mitigate this problem by exposing people to opposing viewpoints have been shown to make the problem even worse by increasing polarization \cite{bail2018exposure}. Alternate models have shown that polarization, especially the kind in which people form contradictory beliefs when presented with the same factual information, can be a rational phenomenon based on differences in subjective past experiences \citep{haghtalab2021belief, singer2019rational}. And a large-scale experiment on the Facebook platform demonstrated that increasing the share of a user's feed that comes from like-minded people and publishers did not significantly affect any measures of polarization \citep{nyhan2023like}. Overall, we now view polarization on online social platforms as a far more complex phenomenon than initially hypothesized. \par

The shift toward a more nuanced understanding of polarization has taken several forms. First, there is an increasing focus away from issue-based polarization to affective polarization; the former refers to a shift over time in individual opinions to more extreme views, whereas the latter refers to deep-seated partisan animosity and increased hateful rhetoric toward opposing viewpoints \citep{iyengar2019origins}. There is a rise in affective polarization not only in the political sphere but also in more general social issue discourse, and not only in the United States but worldwide \citep{boxell2022cross, yarchi2021political}. Second, related to partisanship, there is growing attention on the role of key political entities on online platforms, such as partisan news media organizations and political influencers, who can broadcast perspectives to the public, thus exacerbating affective polarization. For example, between 2016 and 2020, influencers in the United States became both more political and more polarized. Similar trends have been observed in other countries, such as India \citep{dash2022divided}. Given the well-documented impact of partisan media in traditional news on electoral outcomes \citep{dellavigna2007fox}, the influence of ideological media organizations on online platforms is an increasing concern. Third, there has been a shift from viewing polarization as solely an individual phenomenon to recognizing it as a population-level dynamic. While the former focuses on the factors that might cause an individual to shift their position to more extreme views, the latter examines the mechanisms of polarization operating at the population level. This broader perspective helps uncover phenomena such as false polarization or the perception gap, where differences in who expresses opinions versus who remains silent create the impression that the population is more polarized than it actually is. In fact, the more partisan views are, the greater the misperception about the opinions of the out-group \citep{levendusky2016mis, yudkin2019perception}. Reducing this misperception has shown promise in reducing affective polarization \citep{lees2020inaccurate}. Social recommendation systems are at the heart of these issues and play a key role in matching users with content to increase platform participation and engagement. As our understanding of these complex social processes evolves, we need improved formal frameworks to model social dynamics, from individual to population-level effects, and to develop strategies for better online platform design. \par

In this paper, we present a formal framework that models the interplay between these dynamics at the levels of individual users on platforms, various media organizations, and social recommender systems. From the perspective of individual users, we develop a game-theoretic model in which users strategically decide on the rhetorical intensity with which they will express their opinions. This decision is based not just on their personal opinion on a particular issue but also on their beliefs about the opinions and the rhetorical intensity they expect from out-group and in-group members in the community. We specifically model a user's costs and benefits of other's rhetoric: intense rhetoric from an in-group member boosts the utility a user expects from expressing their own opinion while intense rhetoric from an out-group member dampens own utility. Using this model, we show that, in equilibrium, people with more extreme opinions are more likely to use intense rhetoric, while those with moderate views are more likely to remain silent, a phenomenon we refer to as \emph{rational silence}. One implication of this model is that even when individuals' opinions do not change over time, the population can appear more polarized due to the suppressive effect increasing rhetorical intensity has on moderate opinion holders, leading to false polarization.
At the organizational level, we integrate the individual opinion expression model into a model of \emph{viewpoint stewarding} -- a process by which media organizations deliberately shape long-term beliefs about out-group and in-group opinions within a community. We refer to media organizations that engage in this process as viewpoint organizations. We consider two types of viewpoint organizations: participatory organizations, which aim to maximize the percentage of users who express their opinions on a platform, and ideological organizations, which aim to shift public opinion to their preferred position and distort the perceived average opinion. We elucidate the optimal strategies for each type of organization and show that ideological organizations achieve their goals by making a community believe that the out-group is more extreme than it actually is.
Finally, from the perspective of a platform’s social recommender system, which seeks to maximize individual engagement, the platform adapts to users by selecting content from either ideological or participatory organizations. We show that this dynamic results in the stratification of the population into distinct organizational communities with different characteristics. Participatory communities have a greater representation of moderate opinion holders with moderate second-order beliefs about out-group and in-group opinions. In contrast, ideological communities are characterized by more extreme opinions and second-order beliefs, while the silent population tends to have moderate own opinions but extreme second-order beliefs about the opinions of others. \par

The rest of the paper is organized as follows. In Sec. \ref{sec:normative_coordination_game}, we begin with the model of the game that captures individual opinion expression and rational silence. In Sec. \ref{sec:inst_stewarding}, we develop a model of viewpoint steering involving a single organization within a population. In this section, we derive the optimal signaling policies for the two types of viewpoint organizations, participatory and ideological. In Sec. \ref{sec:multi_inst}, we extend this framework to an environment with multiple viewpoint organizations, a population with heterogeneous beliefs, and a setting where the recommender system of an online platform mediates the organizational stewarding process. We demonstrate the formation of organizational communities and discuss the opinion and belief characteristics of each community. Finally, we conclude by presenting policy implications for mitigating the population polarization identified in our model.

\section{Related Work}
Polarization as a phenomenon of politics of the 21st century has been one of the most widely discussed topics in the academic literature in the fields of economics \citep{levy2019echo, boxell2017internet}, psychology \citep{jung2019multidisciplinary}, political science \cite{hare2014polarization}, computer science \citep{lim2022opinion}, communication studies \citep{kubin2021role}, and law \citep{fagan2017systemic}. The literature is vast, and in this section, we present literature that is closest to the questions addressed in this paper, namely, models of individual-level and affective polarization, models of self-censorship and silence, and the role of media and recommendation algorithms in online social systems. For a more general coverage of social media and polarization, we refer to existing systematic reviews of the literature \citep{arora2022polarization, tucker2018social, bramson2017understanding, kubin2021role, van2021social, kubin2021role}. 

\subsubsection*{Models of individual-level polarization} There is a well-established body of literature focused on empirical studies identifying polarization on social networks \citep{bakshy2015exposure, cinelli2021echo, tucker2018social}. However, understanding the mechanisms behind the complex social phenomenon of polarization of the public sphere through computational models is essential to explore interventions to mitigate the problem. DeGroote \citep{degroot1974reaching} provided one of the earliest models of how the opinion of others influences an individual's opinion, and Golub and Jackson \citep{golub2012homophily} applied a similar model to demonstrate how homophilic networks (a preference for associating with individuals who share similar opinions or beliefs) lead to belief and opinion polarization. However, the main focus of opinion dynamics models (see \citep{xia2011opinion} and \citep{peralta2022opinion} for an overview) is on how a network and interaction structure affect opinion change over time. In our model, the opinion of an individual stays fixed, and only the observation of the publicly expressed opinion at the population level changes over time. Polarization resulting from opinion dynamics over a network is mediated through the formation of filter bubbles \citep{pariser2011filter} and echo chambers, and consequently, concerns about the rise of polarization in broader public discourse have led to empirical approaches to their detection in online social systems \citep{nguyen2014exploring, bakshy2015exposure, alatawi2021survey}, and proposed models to break that effect \citep{helberger2018exposure, li2023breaking}. However, studies such as those of Bail et al. \cite{bail2018exposure} also show that presenting people with information that they perceive to contradict their deeply held beliefs can result in further polarization. \par 
Given the mixed evidence on interventions to reduce polarization based on filter bubbles, recent literature has followed two different paths of analysis of polarization. One looks at better mechanistic models to capture individual-level polarization, specifically, the question of why presenting people with the same information can lead to two separate conclusions. Within this branch of models, some are based on a cognitive process of motivated reasoning \citep{jost2022cognitive}, and others are based on a rational choice-based model \citep{dorst2023rational}. The explanation of polarization through rational choice-based models predates social media; for example, Sunstein \citep{sunstein1999law} identifies social comparison (desire for favorable perception by one's ingroup) and subjective persuasion as two mechanisms that can lead to polarization. Recent literature often relies on differences across individuals based on their past experiences, and examples of specific modeling approaches include an agent-based approach \cite{singer2019rational}, Bayesian approach \citep{jern2014belief}, and learning theoretic approach \citep{haghtalab2021belief}. In our model, although there is a similarity in the modeling paradigm of rational choice and its connection to polarization, the process that leads to polarization in our model is fundamentally different.  Whereas the above literature provides a rational explanation of polarization through individuals coming to different conclusions, changing opinions, or rejecting factual information, our model identifies an individual's strategic decisions whether and how intensely to express their opinion as a key factor contributing to polarization. 

\subsubsection*{Models of affective polarization} A second branch of the literature moves the focus from opinion polarization at the individual level to affective polarization \citep{iyengar2019origins} where the focus is on the extend of out-group animosity.  The dominant narrative around the rise of affective polarization appeals to social identity theory \citep{iyengar2019origins}. The theory suggests that individuals increasingly perceive their primary identity along partisan lines, and the increase in affective polarization is a consequence of a cluster of cognitive processes rooted in group identity-based social psychology. A smaller line of work uses rational choice-based models of affective polarization seek to provide a theoretical explanation for \emph{why} such a phenomenon occurs.  For example, based on national election data from four countries, Algara and Zur \citep{algara2023downsian} show that a Downsian model \citep{downs1957economic} of strategic behavior of voters based on ideological positions explains affective polarization more than partisan identification. As another example, Yaouanq \cite{le2018model} models ideological disagreement as arising from rational choice about how to interpret ambiguous evidence in the context of  motivated beliefs based on preferred policy outcomes. \par

\subsubsection*{Models of self-censorship} The \emph{spiral of silence} \citep{noelle1974spiral} is a widely discussed theory in communication studies in which minority opinion holders' fear of isolation drives them into self-reinforcing silence, leading to an absence of minority views from the population. Since the original spiral of silence model predates social media, this framework has seen growing interest in the context of online opinion expression, specifically on the question of whether the model can explain self-censorship in social media. As an answer, a meta-analysis of sixty-six studies by Matthes et al. \cite{matthes2018spiral} shows that a spiral-of-silence-like effect linking perceptions about opinions and the willingness to express own opinions can be  established empirically in online social systems, too. A closely related work on the computational modeling of dynamics leading to a spiral of silence is by Gaisbauer et al. \cite{gaisbauer2020dynamics}. Similar to our work, Gaisbauer et al. build a microfoundational account of opinion expression based on a game-theoretic model with incentives to stay silent or express opinion conditioned on the expression of other agents. However, there are a few key differences in the modeling constructs between our model and the Gaisbauer et al. model. First, the Gaisbauer et al. model uses a network structure and intensity of connection between agents on that network as the main factor that determines agents' perception of others' opinions. In comparison, viewpoint organizations play that role in our model, which helps us analyze a media organization's optimal strategies and how such organizations can distort beliefs. The second difference is that in Gaisbauer et al. \cite{gaisbauer2020dynamics}, opinion expression is discrete, that is, individuals either choose to express or stay silent, whereas, in our model, we capture this process through a choice of rhetorical intensity on a continuous spectrum. This enables us to establish a relation between an ideological opinion and the rhetorical intensity, both of which can lie on a spectrum. Our approach also connects the spiral of silence phenomenon directly to affective polarization.\par

\subsubsection*{Media and online platforms} Media and its role in polarization has been well studied in the economics and political science literature \citep{prior2013media}. The implications of the interaction between users' strategic information sharing and social media platform incentives have been developed in the context of misinformation sharing and fake news \citep{acemoglu2023model, papanastasiou2020fake, hsu2020news}. Although there are some similarities to our work, particularly in the use of game-theoretic models of interaction, such as Acemoglu's \citep{acemoglu2023model} model, which frames user content sharing as a game of strategic complements, there are important qualitative differences. Unlike the focus on information reliability and misinformation, our model examines the interplay between private opinions, rhetorical choices, and their impact on publicly observed opinions.\par

\subsubsection*{Social recommender systems and polarization} Approaches that explicitly model the algorithmic effect of social recommender systems often focus on the network topology, mainly how these systems help form homophilic networks by recommending connections between individuals with similar opinions. Musco et al. \citep{musco2018minimizing} propose a novel metric for recommender systems aimed at balancing the reduction of polarization (by connecting users with dissimilar opinions) without increasing the risk of disengagement due to disagreement. While homophilic community formation is one analysis, another set of models looks at how these connections influence opinion dynamics over time. For example, Morales and Cointet \citep{ramaciotti2021auditing} have examined the interplay between recommendation systems and models of opinion dynamics. They combine network-based recommender systems with a DeGroot opinion dynamics model \citep{degroot1974reaching}. Using simulation data based on Twitter interactions among French Members of Parliament, Morales and Cointet find that different recommendation algorithms impact polarization differently. Specifically, they show that the Alternating Least Squares \citep{hu2008collaborative} and Bayesian Personalized Ranking \citep{rendle2012bpr} algorithms increase polarization, while the Logistic Matrix Factorization algorithm \citep{johnson2014logistic} reduces it. Adding to this line of approaches, Santos et al. \citep{santos2021link} demonstrate that structural similarity—recommendations based on common neighbors in a network—can also exacerbate polarization. While these existing models provide valuable insights into how recommender systems contribute to polarization through opinion similarity and structural dynamics, they do not account for the strategic behaviors of users and the fact that non-engagement of users with the recommender systems itself can have a negative externality of polarization. Our work advances the latter line of research by incorporating the strategic behavior of users that models silence and expression, as well as the nested incentives of both platforms and media organizations.  

\begin{figure}[t]
 \centering
         \includegraphics[width=0.25\textwidth]{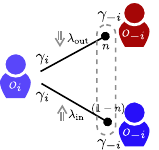}
\caption{Each player with a private opinion $o_{i}$ is matched with another player drawn from an opinion distribution $f$. With probability $n$ they are matched with someone from their in-group and probability $(1-n)$ from out-group. Each player strategically chooses their rhetorical intensity $\gamma_{i}, \gamma_{-i}$. If the player is matched with the in-group,  utility is boosted by a factor $\lambda_{\text{in}} > 1$ and, conversely,  utility is dampened by a factor $\lambda_{\text{out}} < 1$ when matched with an out-group member.}
\label{fig:game_interaction}
\end{figure}

\section{Opinion Expression Game and Rational Silence}\label{sec:normative_coordination_game}
In this section, we present the opinion expression game that models users' expression of opinion on a given issue. We assume a context in which there is a matter of public debate in front of the community, such as whether to adopt a legislator's proposed policy or how a supreme court should resolve a case it is scheduled to decide. Individuals hold a spectrum of opinions on this issue but can be grouped based on a binary partition: approval (the policy should be adopted, the court should be hold in favor of the plaintiff) and disapproval (the policy should be rejected, the court should hold in favor of the defendant). We model individuals' choices about whether to express their opinion on the matter or remain silent as an iterated game in which individuals in each iteration are randomly matched with someone else from the community. This is a stylized representation of online interactions where users engage with a social media platform periodically and primarily interact with other users who are on the platform at (roughly) the same time. In any iteration, an individual may find themselves matched with someone  in their \lq in-group\rq  who shares their binary position (approval or disapproval) on the question or \lq out-group\rq,  who holds the opposite position. First, we present the main constructs of the game, followed by an equilibrium analysis that determines individual behavior. \par
\subsection{Opinions} Each individual, indexed as $i$, holds private opinions $o_{i} \in [0,1]$ on a focal issue. Using game-theoretic terminology, the opinion of the player $i$ is their \emph{ type}, which is private information. The opinions are drawn from an arbitrary opinion distribution $f_{(o_{A},o_{D})}$, where $o_{A},o_{D}$ are the mean approval and disapproval opinions on which the distribution is parameterized. The values $o_{i} < 0.5$ indicate \textit{ disapproval} on the focal issue and $o_{i} \geq 0.5$ indicates \textit{approval}.  Opinions lying on the continuum between 0 and 1 represent the diversity of opinions that go beyond a simple yes/no dichotomy. At the two extremes, an opinion of $o_{i}=1$ or $o_{i}=0$ denotes complete support for or against the matter under debate, respectively: support for all elements of a proposed policy, for example, or opposition to a finding for the plaintiff on any possible grounds in a court case. We then conceptualize an opinion between 0 and 1 as an indication  that on average an individual may support (or not support) a particular outcome, but were a narrower question posed, their position might change. For example, suppose the matter under debate is whether animal testing of products should be prohibited. Someone might support a proposed total ban even though they would prefer that testing be allowed in a narrow set of cases for the development of life-saving drugs. The opinion $o_{i}$ can be thought of as capturing the strength of approval or disapproval with respect to a particular contentious question. Note that unlike most work in the literature on social media, we assume people's opinions on a matter are fixed and not subject to influence or change. \par
\begin{table*}[htbp]

\begin{tabular}{@{}c|c|p{12.5cm}}
\toprule
Parameters& Parameter type   &  Description\\ \midrule
$\hat{n}, o_{A}, o_{D}$   & \multirow{1}{*}{Descriptive belief} & Estimated proportion of the population that holds approval opinions, the estimate of the mean approval and disapproval opinion, respectively.\\
\midrule
$\gamma_{-i}$& \multirow{2}{*}{Strategic choices} & Rhetorical intensity of the non-focal player\\
$\gamma_{i}$& & Rhetorical intensity of the focal player\\ \midrule
$o_{i}$& Private information & Opinion of the focal player\\
\midrule
$\alpha$&    & Cost of using rhetoric to express opinions, including possible penalties imposed by platform's moderation policy\\ 
$\lambda_{\text{in}}$&  Exogenous parameters  & Utility boosting factor from agreeing with in-group\\ 
$\lambda_{\text{out}}$&    & Utility reduction factor from disagreement with out-group\\ 
\bottomrule
\end{tabular}

\caption{Parameters that define the interaction between two players within a group.}
\label{tab:params}
\end{table*}
\subsection{Rhetorical intensity} Whereas opinion represents an individual private position on an issue, \textit{rhetorical intensity} refers to \emph{how} one chooses to express it. Continuing on the previous example, an individual who has absolute support for a total ban on animal testing ($o_{i}=1$) and who chooses to express this opinion can choose to express themselves mildly, with a simple statement of their view. Or they could choose to express their opinion in a highly partisan way with verbal abuse of those holding an opposing view. We model this choice as a continuous variable $\gamma_{i}$ $\in [0, 1]$. Values close to $0$ represent remaining silent, low values represent a mild expression of one's opinion, and values close to $1$ represent a threat of violence or other forms of extreme behavior.
\iffalse

\fi
\begin{figure}[t]
    \centering
    \includegraphics[width=0.3\textwidth]{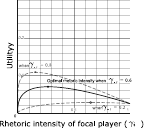}
    \caption{The net utility of expressing an opinion (after deducting the cost) as a function of rhetorical intensity for the focal player. The utility-maximizing rhetorical intensity of the focal player decreases with an increase in other players' rhetorical intensity.}
    \label{fig:utility_wrt_phi}
\end{figure} 
\subsection{Cost and utilities} The opinion expression game is played between an individual, indexed as $i$, and a randomly selected member of the community, indexed as $-i$. For purposes of our simulations we construct a utility function that captures the following ideas. First, a person experiences utility directly from expressing their opinion and this scales with the strength of their agreement with the proposal under debate. So, a person who is one-hundred percent in agreement with the question ($o=1$) gets more utility from expression than someone who disagrees with some of the fine-grained elements of the matter under debate but is still overall supportive. (Using our earlier example, someone who supports a total ban on all animal testing gets more utility from expressing that view than someone who would prefer a policy that allowed exceptions for the development of life-saving drugs but will still choose to support rather than oppose the proposed total ban.) We also assume that a person's utility from expression is increasing in the level of rhetoric they use. %, although we also assume that using higher levels of rhetoric is costly (incurring psychological or social costs, for example), and so optimal rhetoric in general will be less than the maximum possible. 
Next, we take into account that the utility from expressing one's opinion depends also on whether one is matched with someone who is in overall agreement on a particular question (a member of one's in-group) or someone with whom one disagrees (a member of one's out-group.) Specifically, utility from opinion expression is increased (decreased) when one exchanges views with a member of the in-group (out-group). Finally, we assume that there is a positive externality associated with the rhetoric used by members of one's in-group. This affects an individual's own rhetoric level because we also assume that it is costly to use rhetoric, and increasingly so as the intensity of rhetoric increases. This implies that a user can enjoy the same level of utility from opinion expression at lower levels of own rhetorical intensity when interacting with someone who engages in more intense rhetoric to express the same (similar) viewpoint. We capture these ideas in the following utility function shown here for a focal individual with an approval opinion:
\small
    \begin{align}
    \begin{split}u_{i}(\gamma_{i},\gamma_{-i};o_{i}) = & \underbrace{ \hat{n} \cdot o_{i} \cdot \lambda_{\text{in.}} \cdot \gamma_{i}^{(1-\gamma_{-i})}}_{\text{\parbox{4cm}{\centering Utility from expression when matched with in-group member}}} \\ & \underbrace{+ (1-\hat{n}) \cdot  o_{i}  \cdot \lambda_{\text{out}}^{\gamma_{-i}} \cdot \gamma_{i}}_{\text{\parbox{4cm}{\centering Utility from expression when matched with out-group member}}}\\ &- \underbrace{\alpha\cdot \gamma_{i}}_{\text{\parbox{3cm}{\centering Cost of expressing opinion}}}
    \end{split}
    \label{eqn:utility}
    \end{align}
\normalsize
and for individuals with opinions of disapproval as:
\small
\begin{align}
    \begin{split}u_{i}(\gamma_{i},\gamma_{-i};o_{i}) = & \underbrace{ (1-\hat{n}) \cdot (1-o_{i}) \cdot \lambda_{\text{in.}} \cdot \gamma_{i}^{(1-\gamma_{-i})}}_{\text{\parbox{4cm}{\centering Utility from expression when matched with in-group member}}} \\ & \underbrace{+ \hat{n} \cdot  (1-o_{i})  \cdot \lambda_{\text{out}}^{\gamma_{-i}} \cdot \gamma_{i}}_{\text{\parbox{4cm}{\centering Utility from expression when matched with out-group member}}}\\ &- \underbrace{\alpha\cdot \gamma_{i}}_{\text{\parbox{3cm}{\centering Cost of expressing opinion}}}
    \end{split}
    \label{eqn:utility_disappr}
    \end{align}
\normalsize
where $\hat{n}$ is the proportion of the population with approval opinions. $\lambda_{in} > 1$ and $\lambda_{out} < 1$ are boosting and reducing constants that represent the increase and decrease of utility from being matched with the in-group or out-group, respectively. $\gamma_{i}$ and $\gamma_{-i}$ are the rhetorical intensity of expression of opinion for the focal and non-focal player, respectively. Note that we model out-group rhetorical intensity as a factor that increases the extent to which the utility enjoyed from own-opinion expression is dampened by being matched with someone with whom one disagrees. $\alpha$ is a cost factor that captures any costs incurred by  rhetorical intensity. This could include personal psychic or time costs as well as costs imposed by a social media platform such as a flag, demoting a post in a newsfeed, or limiting a user's access to the platform. We thus also allow $\alpha$ to be interpreted as a decision variable for the platform with a stricter moderation policy (higher $\alpha$) imposing a higher cost on the use of higher levels of rhetoric in expressing one's opinion. 
\begin{figure*}[t]
 \centering
         \includegraphics[width=\textwidth]{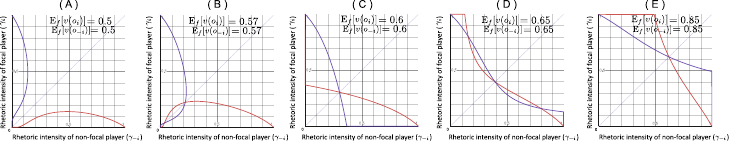}
\caption{Best response curves of the optimal rhetorical intensity for the focal (shown in red) and non-focal (shown in blue) player as a function of their opinion drawn from an opinion distribution of equal support for approval and disapproval. For the purpose of illustration, the constants $\alpha,\lambda_{\text{in}},\lambda_{\text{out}}$ are set to 0.7, 2, and 0.5, respectively. The intersection of the two curves represents the Bayesian Nash equilibrium rhetorical intensity in the \emph{ex-ante} form. The plots are shown for different opinions (types) of the focal and non-focal player, and we see that as the opinion moves to the extreme, the symmetric equilibrium rhetorical intensity becomes higher. }
\label{fig:br_curves}
\end{figure*}
\normalsize

\subsection{Optimal rhetoric} The utility function given in Equations \ref{eqn:utility} and \ref{eqn:utility_disappr} is concave in $\gamma$ for a focal individual and this means that we can solve for an optimal rhetorical intensity for this individual. Figure \ref{fig:utility_wrt_phi} shows this relationship. Rhetorical intensity of individual $i$ is shown on the $x$ axis and the $y$ axis represents total utility. The optimal rhetorical intensity also depends on the rhetorical intensity of the matched player's expression of opinion ($\gamma_{-i}$), shown in the figure as the three separate utility curves. The optimal rhetorical intensity of the focal player $i$ decreases with an increase in the other player $-i$'s rhetorical intensity. When matched with a member of the in-group, this effect arises because the other's rhetoric is a substitute for own rhetoric. When matched with a member of the out-group, this effect is because extreme rhetoric from the other side increases the extent to which utility from own expression is dampened.

\subsection{Equilibrium}
\label{sec:equilibrium}

We can now calculate the equilibrium rhetorical intensity that determines the outcome of the game. In order to facilitate our analysis, we focus on the symmetric Bayesian Nash equilibrium in continuous strategies. Within Bayesian Nash equilibrium, there are two possible forms of analysis; in the \emph{ex-ante} form, the analysis is based on the expected utility (over types) for each player, and in the \emph{ex-interim} form, each player is aware of their own type, and therefore, the players create their own view of the game in which their utility depends on their type and the utility of the other player is an expected utility over all possible types. In this section, we focus the analysis on the former form, and we rely on the latter form and the connection between the two for simulation in Sec. \ref{sec:long_run_dynamic}. \par 
Note that the opinion lies in the interval $[0,0.5)$ for the disapproval group and in the interval $[0.5,1]$ for the approval group. In order to make the value of expressing the opinion symmetric between the groups, we introduce the variable $v(o_{i})$ such that $v(o_{i}) = o_{i}$ when $o_{i} \geqslant 0.5$ and $(1-o_{i})$, otherwise. Next, we consider this value of the opinion ($v(o_{i})$) of each individual as their \emph{type} as in standard game-theoretic terminology. One can treat the type here as capturing the extremity of the opinion holder; moderate opinion holders will have a lower type, and extreme opinion holders will have a higher type. Finally, the formal form of the \emph{ex-ante} utility of a representative individual under equal support of approval and disapproval group size (i.e, $F(0.5) =0.5$ for opinion distribution $f$) is given by the following equation (Lemma \ref{lemma:exp_util}, \textit{c.f} Appendix):
\begin{align*}
\mathrm{E}_{f}[u_{i}(\gamma_{i},\gamma_{-i})] 
= &\hat{n} \cdot \mathrm{E}_{f}[v(o)] \cdot \lambda_{\text{in.}} \cdot \gamma_{i}^{(1-\gamma_{-i})} \\ &+ (1-\hat{n}) \cdot  \mathrm{E}_{f}[v(o)]  \cdot \lambda_{\text{out}}^{\gamma_{-i}} \cdot \gamma_{i} - \alpha\cdot \gamma_{i}
\end{align*}

The best response of a representative player $i$ to another representative player $-i$ can be found by taking the partial derivative of the above utility function with respect to $\gamma_{i}$. Similarly, the best response of a player $-i$ to a player $i$ can be found by taking the partial derivative with respect to $\gamma_{-i}$. This gives us the best response curves for \textit{i} and \textit{-i}:
 %The non-focal player $-i$ can be from the approval group or the disapproval group. More formally, their expected utility is of the form $\mathrm{E}_{f}[u_{-i}(\gamma_{-i},\gamma_{i};o_{-i}|o_{-i} < 0.5)] +
%\mathrm{E}_{f}[u_{-i}(\gamma_{-i},\gamma_{i};o_{-i}|o_{-i} \geqslant 0.5)]$. The optimal rhetoric of the non-focal player can be calculated by taking the partial derivative of their expected utility with respect to their rhetorical intensity, $\gamma_{-i}$. This gives us best response curves for \textit{i} and \textit{-i}:
\begin{equation}
    BR_{i}(\gamma_{-i};o_{i}) = \min \left( 1, \max \left( 0, \left( \frac{ \hat{n} \cdot \mathrm{E}_{f}[v(o_{i})] \cdot \lambda_{\text{in}} \cdot (1-\gamma_{-i})}{\alpha - (1-\hat{n})(\mathrm{E}_{f}[v(o_{i})]\lambda_{\text{out}}^{\gamma_{-i}})} \right)^{\frac{1}{\gamma_{-i,}}}\right) \right)
\label{eqn:br_eqns_a}
\end{equation}
\begin{equation}
    BR_{-i}(\gamma_{i};o_{-i}) = \min \left( 1, \max \left( 0, \left( \frac{ \hat{n} \cdot \mathrm{E}_{f}[v(o_{-i})] \cdot \lambda_{\text{in}} \cdot (1-\gamma_{i})}{\alpha - (1-\hat{n})(\mathrm{E}_{f}[v(o_{-i})]\lambda_{\text{out}}^{\gamma_{i}})} \right)^{\frac{1}{\gamma_{i,}}} \right) \right)
    \label{eqn:br_eqns_b}
\end{equation}

The Bayesian Nash equilibrium of an imperfect information game assigns strategies for every possible \emph{type} of a player. This means that in our case, there is an equilibrium rhetorical intensity as a function of the opinion value $v(o)$. We construct this function by solving the set of Equations \ref{eqn:br_eqns_a} and \ref{eqn:br_eqns_b} based on the symmetric equilibrium condition $BR_{i}(\gamma^{*}_{-i};o_{i}) = BR_{-i}(\gamma^{*}_{i};o_{-i})$ where $(\gamma^{*}_{i},\gamma^{*}_{-i})$ represents the pair of equilibrium rhetoric intensities. Other than the parameters $\hat{n}$ and $\alpha$, which are treated as constants for the purposes of solving for the equilibrium, the solutions to these equations are a function of the expected opinion value $v(o)$. \par 

Fig. \ref{fig:br_curves} shows the plot of the two best response curves (the focal player is shown in red, the best response curve of the non-focal player is shown in blue), the equilibria (intersection of the two curves), and how the equilibria change with increasing opinion value. Consider first optimal rhetorical intensity for the most moderate opinions ($\mathrm{E}_{f}[o_{i}] = 0.5$) (Panel A). For individuals with such opinions, if they anticipate engaging with someone who does not express an opinion ($\gamma_{-i} = 0$) their best response is to also remain silent ($\gamma_{i}$).  This is attributable to the low rewards for opinion expression for moderates. When the cost of expression (mediated by $\alpha$) is sufficiently high, these rewards do not warrant expression when there is no boost from being matched with someone who shares their views and whose rhetoric confers additional utility on the focal player. That boost occurs as the rhetorical intensity increases for the non-focal player, but it is counterbalanced by the possibility that the non-focal player is from the out-group and instead of a boost there is a dampening of the returns to opinion expression. We see that these combined effects initially lead the focal player to incur the cost of mild rhetoric but as the rhetorical intensity of the partner increases, the dampening effect comes to dominate and optimal rhetoric for the moderate focal player drops again, ultimately again inducing silence ($\gamma_{i} = 0$) as the partner's rhetoric reaches an extreme ($\gamma_{-i} = 1$).  As opinions become more extreme, however ($\mathrm{E}_{f}[v(o_{i})] = 0.6$, Panel C), the optimal response of a representative player to someone who is expected to remain silent is to use significant rhetoric to express themselves ($\gamma_i \approx 0.4$): there is no risk of bearing the cost of rhetoric from an out-group member. But as the rhetoric from the matched player increases, that cost reduces the expected return to the focal player's rhetoric and for a fixed cost of rhetoric, the optimum decreases. With very high rewards to opinion expression (Panel D), the focal player engages in maximal rhetoric until facing moderately intense rhetoric from the partner, at which point their own rhetoric begins to moderate. 

Focusing on the symmetric equilibria, i.e equilibria in which the optimal rhetorical intensity of the focal player and the non-focal player are the same ($\gamma^{*}_{i}=\gamma^{*}_{-i}$), we see that with increasing opinion values $o_{i}$, the equilibrium rhetorical intensity becomes higher. This is because at more extreme opinion values, the higher utility generated by more extreme opinions offsets the higher cost incurred from increased rhetorical intensity for both players. The following theorem captures these results.

\begin{theorem}
Let $v(o_{i})=o_{i}$ if $o_{i} \geqslant 0.5$ and $v(o_{i})=1-o_{i}$, if $o_{i} < 0.5$ and $o_{i} \sim f$ is drawn from any arbitrary prior opinion distribution with \texttt{p.d.f} $f$ and \texttt{c.d.f} F, and $\hat{v}_{o} = \mathrm{E}_{f}[v(o_{i})]$. Then, under the condition of equal support of approval and disapproval, i.e., $F(0.5)=0.5$, there exists an ex-ante symmetric Bayesian Nash equilibrium $\gamma_{i}^{*}=0$, $\forall i$, if and only if $\hat{v}_{o} < \frac{\alpha}{1-\hat{n}\cdot (1-\lambda_{in})}$  
\label{theo:exp_bound}
\end{theorem}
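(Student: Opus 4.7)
The plan is to verify the equilibrium condition directly from the expected-utility expression given in Lemma \ref{lemma:exp_util}, rather than from the best-response equations (\ref{eqn:br_eqns_a})--(\ref{eqn:br_eqns_b}), because the exponent $1/\gamma_{-i}$ appearing there is singular at the candidate equilibrium point $\gamma_{-i}=0$. The hypothesis $F(0.5)=0.5$ is precisely what Lemma \ref{lemma:exp_util} needs to collapse the two group-specific utilities (\ref{eqn:utility}) and (\ref{eqn:utility_disappr}) into a single symmetric ex-ante expression in $\hat{v}_{o}=\mathrm{E}_{f}[v(o_{i})]$; once that is in hand, verifying the candidate symmetric equilibrium reduces to a one-variable maximization over the compact interval $[0,1]$.

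First I would substitute $\gamma_{-i}=0$ into the ex-ante utility. Because $\gamma_{i}^{1-\gamma_{-i}}=\gamma_{i}$ and $\lambda_{\text{out}}^{\gamma_{-i}}=1$ at $\gamma_{-i}=0$, the expression degenerates to the linear form
\[
\mathrm{E}_{f}\bigl[u_{i}(\gamma_{i},0)\bigr] \;=\; \gamma_{i}\,\Bigl(\hat{v}_{o}\bigl(1+\hat{n}(\lambda_{\text{in}}-1)\bigr) \;-\; \alpha\Bigr).
\]
Over $[0,1]$ a linear objective is maximized at one of the two endpoints, with the choice determined entirely by the sign of the bracketed slope. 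So verifying that $\gamma_{i}^{*}=0$ is a best response to $\gamma_{-i}^{*}=0$ reduces to a sign condition on a single scalar.

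Next I would match that sign condition to the stated threshold. The slope is strictly negative exactly when $\hat{v}_{o} < \alpha/\bigl(1+\hat{n}(\lambda_{\text{in}}-1)\bigr)$, which is algebraically identical to $\hat{v}_{o} < \alpha/\bigl(1-\hat{n}(1-\lambda_{\text{in}})\bigr)$. Under this condition $\gamma_{i}=0$ is the unique best response to $\gamma_{-i}=0$, and by the symmetry of the two roles in the game, $(\gamma_{i}^{*},\gamma_{-i}^{*})=(0,0)$ is a symmetric ex-ante Bayesian Nash equilibrium. Conversely, if the inequality is reversed, the slope is strictly positive, the unique best response to $\gamma_{-i}=0$ is $\gamma_{i}=1$, and $(0,0)$ cannot be sustained; this establishes the contrapositive for the ``only if'' direction.

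I do not anticipate a deep technical obstacle: once one avoids the interior first-order formulas and works directly with the ex-ante utility at the degenerate boundary of the strategy space, the argument is a one-line linear check. The two subtleties I would flag are (i) invoking Lemma \ref{lemma:exp_util} carefully, since the in-group and out-group weights $\hat{n}$ and $1-\hat{n}$ appear flipped between (\ref{eqn:utility}) and (\ref{eqn:utility_disappr}), and it is exactly the assumption $F(0.5)=0.5$ that makes the two group cases aggregate symmetrically through $v(o)$; and (ii) noting why the threshold depends on $\lambda_{\text{in}}$ but not on $\lambda_{\text{out}}$---namely, the out-group dampening factor $\lambda_{\text{out}}^{\gamma_{-i}}$ is trivially equal to one at the candidate point $\gamma_{-i}=0$, so only the in-group boosting parameter survives in the marginal condition.
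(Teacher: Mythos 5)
Your proof is correct, and at the decisive step it takes a genuinely different (and cleaner) route than the paper. Both arguments rest on the same Lemma~\ref{lemma:exp_util}: the hypothesis $F(0.5)=0.5$ is used exactly as you describe, to aggregate Eqns.~\eqref{eqn:utility} and \eqref{eqn:utility_disappr} into a single ex-ante utility in $\hat{v}_o$. From there the paper derives the first-order-condition best-response formulas \eqref{eqn:ex_ante_br_a}--\eqref{eqn:ex_ante_br_b}, whose exponent $1/\gamma_{-i}$ blows up at the candidate point, and then argues that the clamped expression equals $0$ when the numerator is smaller than the denominator (for necessity it explicitly passes to the limit $\gamma_{-i}\to 0$); the rearrangement of that numerator--denominator inequality at $\lambda_{\text{out}}^{0}=1$ is precisely your slope condition. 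Your version sidesteps the singular exponent entirely by observing that at $\gamma_{-i}=0$ the ex-ante utility is exactly linear in $\gamma_i$, so the best response is an endpoint of $[0,1]$ determined by the sign of $\hat{v}_o\bigl(1-\hat{n}(1-\lambda_{\text{in}})\bigr)-\alpha$; this gives both directions of the equivalence in one stroke and is more rigorous at the boundary than evaluating an interior FOC there. One small caveat, which your argument shares with the paper's: at exact equality $\hat{v}_o = \alpha/\bigl(1-\hat{n}(1-\lambda_{\text{in}})\bigr)$ the slope is zero, every $\gamma_i$ is a best response, and $(0,0)$ is still sustained as an equilibrium, so the strict ``only if'' technically fails on that measure-zero boundary; your phrase ``if the inequality is reversed, the slope is strictly positive'' glosses over this, but it is an artifact of the theorem statement itself rather than a defect of your approach.
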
 
\begin{proof}
In Appendix

\end{proof}
\iffalse
\begin{corollary}
For all agents $i$ such that $o_{i} < 0.5$ and $ o_{i}  < \frac{(1-\hat{n}) \cdot \lambda_{\text{in}} + \hat{n} - \alpha}{\hat{n}} $ the agent $i$'s optimal rhetorical intensity is greater than zero. 
\end{corollary}

\begin{corollary}
    If there exists an opinion $o_{i} \in [0.5,1]$ for an individual $i$ such that $\gamma^{*}_{i}=0$, then for all individuals $i'$ with opinion $o'_{i} \in [0.5,o_{i}]$, the optimal rhetorical intensity $\gamma^{*}_{i'}$ is also zero. Conversely, if there exists an opinion $o_{i} \in [0,0.5)$ for an individual $i$ such that $\gamma^{*}_{i}=0$, then for all individuals $i'$ with opinion $o'_{i} \in [0,o_{i}]$, the optimal rhetorical intensity $\gamma^{*}_{i'}$ is also zero.
\end{corollary}
\fi
This theorem states that individuals with  opinions that generate utility below a  threshold, will choose to stay silent ($\gamma=0$). This threshold increases as the cost of expressing opinions ($\alpha$) increases, their belief about the fraction of the population that shares their opinion (their in-group) rises, and the boost they get from sharing opinions with their in-group ($\lambda _{in}$) shrinks. Conversely, those with more extreme views, above this threshold, will express their views. This leads to a predictable pattern of online opinion sharing being biased to those with more extreme views and is a theoretical demonstration of false polarization, with expressed views that are not representative of the true distribution of opinions in the population. %This gives us the key dynamic that both provides levers to viewpoint organizations to either encourage or discourage opinion expression and the mechanism by which recommender systems can lead to a pattern of online expression that diverges from the ground truth about the population.  

 \begin{figure}[t]
 \centering
\includegraphics[width=0.4\textwidth]{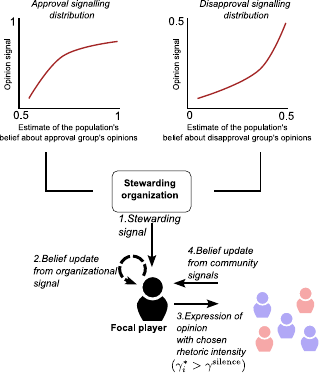}
\caption{Schematic representation of the organizational stewarding process. (1) An organization randomly samples a signal that conveys information about approval or disapproval group's opinions from the organization's signaling scheme. (2) The focal agent updates their descriptive belief based on the signal (3)). The focal agent estimates the equilibrium rhetorical intensity and expresses opinions. (4)) A community of agents also expresses opinions based on organizational signals. The focal agent and others updates descriptive beliefs based on the expressed opinion in the community.}
\label{fig:single_inst_stewarding_diag}
\end{figure}
\section{Organizational stewarding}
\label{sec:inst_stewarding}
The equilibrium analysis presented in the previous section is still based on a static game, meaning that we establish a one-shot interaction in which the expressed rhetorical intensity is a function of the opinions players expect others to hold and the rhetorical intensity with which those players are expected to express their views. In this section, we present a higher-level dynamic that models the belief stewarding process, whereby individuals choose actions based on the static opinion expression game, and \textit{viewpoint organizations} can influence the beliefs in the game by conveying strategically relevant information, that is, information about the in-group and out-group's opinions. We show that based on their own long-run objectives, viewpoint organizations can use their coordination capacity to sway publicly expressed opinions by shaping who expresses their opinion and who stays silent. \\

\noindent \textbf{In-group and out-group signaling}. The nature of the viewpoint organizations we consider in our model are those that convey information about the population's opinions. Traditional media, such as newsprint and talk radio, as well as new media organizations, such as political influencers and independent online media, fall into this category. We model the actions of these organizations as a set of signals, one conveying information about the population that approves a particular position, and another about the population that disapproves of that position. As a practical matter, these signals are generated by actions such as the choice made by an organization about who to interview for a particular story or whose opinion to share on social media, or what terms to use to describe a debate. These choices produce slant and bias \cite{mullainathan2005market} in the content of the reporting that conveys information about each of the positions of the group. \\

\noindent \textbf{Signaling policy and bounded confidence constraints}. By adding their own slant to a story that conveys information about a population's opinion, media organizations can generate biased signals that differ from the true opinion of the two groups \cite{rodrigo2023systematic}. We refer to this slant and bias as the organization's signaling policy and can be represented as the distribution $\pi(o_{I}|\hat{o})$, where $o_{I}$ is the organization's signal and $\hat{o}$ is the organization's estimate of the true approval or disapproval opinion. Even though an organization can bias its signaling policy, the generated signals cannot be arbitrary; rather, those signals need to convey some meaningful information about the population. Based on the Bounded Confidence model of Hegselmann and Krause \citep{rainer2002opinion}, we model this requirement with the help of a constraining distribution $\mathcal{C}$ that connects the generated signal and the estimated true belief. The constraining distribution $\mathcal{C}$ models the random variable of the absolute difference between the signal and the beliefs. We choose $\mathcal{C}$ as a uniform distribution in the domain $\hat{o} \pm \frac{\tau}{2}$ and 0 outside of this range. Although we choose a specific distribution for our analysis, without loss of generality, one can choose an alternate distribution that better reflects the relation between the organization's signaling constraint and the true beliefs. The only requirement in the organizational stewarding process is that the signal should convey \emph{some} information and not just be pure noise. \\

\subsection{Organizational stewarding sequence}
\label{sec:sub_inst_signal}
The organizational stewarding process involves repeated cycles of signal generation by the organization acting as sender and belief update by community members acting as receivers. Fig. \ref{fig:single_inst_stewarding_diag} shows the sequence of steps involved in one cycle, and we describe the sequence in more detail below.
\begin{enumerate}[wide, labelwidth=!, labelindent=0pt]
    \item \textit{Organizational signal generation (stewarding signal):} The organization holds prior belief about the mean opinion of the approval and disapproval group. It generates a signal about the approval and disapproval group's opinion from two separately chosen distributions. Each of these distributions is conditioned on the organization's estimate of the groups' true mean opinions. We elaborate upon the choice of the signal generating distribution from the organization's perspective in the next section (Sec. \ref{sec:opt_signalling}) %randomly selects $\pi(o_{\geqslant 0.5,I}|\hat{o}_{\geqslant 0.5})$ or $\pi(o_{< 0.5,I}|\hat{o}_{< 0.5})$ as the signal-generating distribution and generates a random signal $o_{I}$ conditioned on their prior beliefs.
    \item \textit{Belief update (from organizational signal):} When a player receives the signal, they interpret the signal as one about their \emph{in-group} or \emph{out-group} based on their own opinion $o_{i}$, and update their descriptive belief about the corresponding group based on Bayes rule as follows:
    \begin{equation}
        f_{\text{posterior}}(\hat{o}_{t+1}|o_{I,t}) = \frac{\mathcal{C}(o_{I,t};\hat{o}_{t},\tau) \cdot f_{\text{prior}}(\hat{o}_{t})}{\int \mathcal{C}(o_{I,t};\hat{o}_{t},\tau) \cdot f_{\text{prior}}(\hat{o}_{t}) \,d\hat{o}_{t}}
    \label{eqn:belief_update}
    \end{equation}
    where $\mathcal{C}(o_{I,t};\hat{o}_{t},\tau)$ is the likelihood of the organization generating the signal based on the constraining distribution $\mathcal{C}$ and $f_{\text{prior}}(\hat{o}_{t})$ is the prior belief about the corresponding group at time-step $t$. $f_{\text{posterior}}(\hat{o}_{t+1}|o_{I,t})$ is the posterior belief about the group after the receiver updates their belief based on the signal. The limits of the integral in the marginal are [0.5,1] or [0,0.5] depending on whether the signal is being generated for the approval or disapproval group, respectively.
    \item \textit{Opinion expression:} Based on the posterior beliefs about the opinions, players estimate the equilibrium rhetorical intensity of the group and then best respond correspondingly. We rely on the equivalence between \emph{ex-interim} and \emph{ex-ante} Bayesian Nash equilibrium to simulate this process \citep{fujiwara2015bayesian}. Players first estimate the symmetric \emph{ex-ante} equilibrium based on the belief about the mean opinions of the groups. Let $\gamma^{*}_{\text{\emph{ex-ante}}}$ denote that value. Subsequently, each player best responds following Eqn. \ref{eqn:br_eqns_a} as $BR_{i}(\gamma^{*}_{\text{\emph{ex-ante}}};o_{i})$. This response, which simulates the individual, is the \emph{ex-interim} response since own opinion (type) is known to the player. Next, individuals with rhetorical intensity less than a threshold $\gamma^{\text{silence}}$ stay silent, whereas others express their opinions.
    \item \textit{Belief update (from community signal):} Based on the mean disapproval and approval of the expressed opinions, everyone in the population (both those who expressed and who stayed silent) update their beliefs using Bayes rule. The update is similar to Eqn. \ref{eqn:belief_update}, but this time, these signals come from the community expressing their opinions.
\end{enumerate}

 \begin{figure*}[htbp]
 \centering
\includegraphics[width=0.9\textwidth]{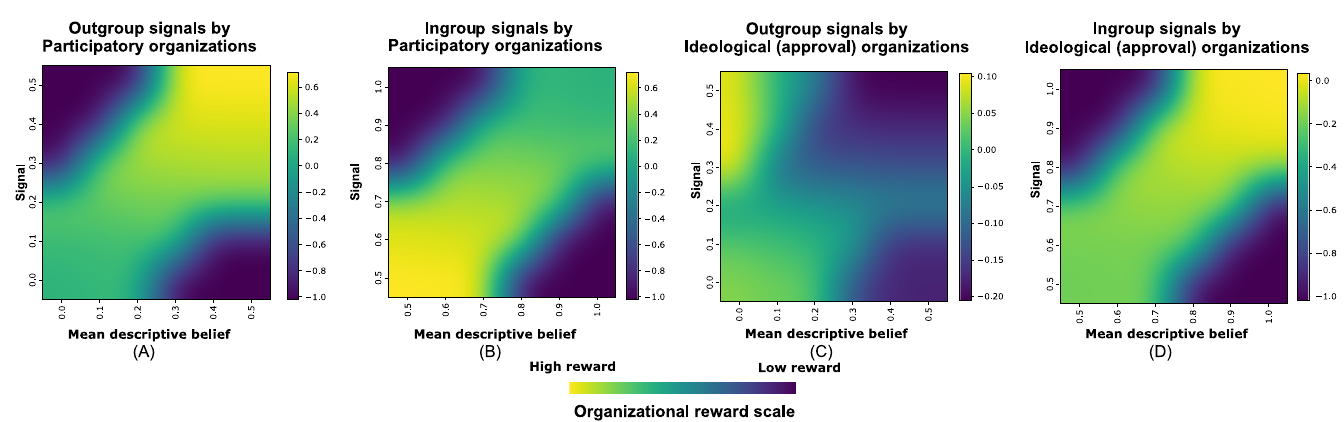}
\caption{Reward heatmap that shows the optimal signaling policy for different viewpoint organizations and for out-group and in-group signals. Brighter regions (yellow) represent higher reward. The $y$ axis shows the organizations' estimate of the true out-group/in-group belief and the $x$ axis shows the corresponding signals to generate. We see that the optimal signaling distribution follows different patterns for both organizations: ideological organizations get higher reward for more extreme signals about the groups' opinions whereas participatory organizations get higher reward from moderate signals.}
\label{fig:inst_rewards}
\end{figure*}

\subsection{Participatory and Ideological organizations}
\label{sec:opt_signalling}
In this section, we focus on step 1 of the organizational stewarding process, that is, the optimal signal generation on the part of the organizations. We analyze how different organizational incentives shape the signaling structure, and the effects of that on the beliefs and subsequent effect on the expression of opinion of the population. We consider two types of organizations: i) \emph{Participatory organizations:} organizations whose objective is to maximize the proportion of the population that express their opinion, and ii) \emph{ ideological organizations:} organizations whose objective is to move mean opinion towards one extreme, that is, 0 or 1. We show that each of these incentives results in a different signaling strategy by the two types of organizations. \par 

To compute the optimal signaling strategy, the organization needs to solve an optimization problem. The optimization problem can be formulated as a Markov Decision Process (MDP) with the organization's estimate of the mean descriptive belief of the approval and disapproval opinions as the state and the generated signal as the action; the optimal policy generates the mapping from one to the other. Note that the state transition, i.e. the change in the descriptive belief of the population from one time step to the next in the organizational stewarding process, depends only on the belief at the previous time step. A time step here refers to one cycle of the organizational stewarding process. Therefore, the Markovian nature of this transition makes it apt for the optimization problem from the perspective of the organization to be formulated as an MDP. For each organization, there are two separate MDPs to be solved; one that generates the optimal signaling policy for information about the approval group and one for the disapproval group. In both MDPs, the usual constructs of \textit{State, Action, Transition, Rewards} is as follows:
\begin{itemize}[leftmargin=*]
    \item \textit{State:} The descriptive belief about the population's approval or disapproval opinion, for each of the two MDPs, respectively. For simplicity, we can denote this as $\hat{o}_{t}$, but it takes values between [0,0.5) for disapproval and [0.5,1] for approval.
    \item \textit{Action:} The signals to generate. Similar to the state variable, this action, $o_{I,t}$, lies between [0,0.5) for disapproval and [0.5,1] for approval group signaling.
    \item \textit{Transition:} The change in the belief, which is based on the two sequential Bayesian updates in step 2 to step 4 applied in sequential manner as described in Sec. \ref{sec:sub_inst_signal}.
    \item \textit{Rewards:} We formulate two reward functions, one for participatory organizations ($R_{\text{pa}}$) and another for the ideological organizations ($R_{\text{id}}$). 
    \begin{align*}
    R_{\text{pa}}(\hat{o}_{t},o_{I,t
    }) &= 2(\frac{|N_{E}^{t+1}|}{N}-0.5) \\ R_{\text{id}}(\hat{o}_{t},o_{I,t},\hat{o}_{t+1}) &=4\bar{o}^{E}_{\geqslant 0.5} - 3
    \end{align*}
    where $\frac{|N_{E}^{t+1}|}{N}$, is the proportion of agents who express their opinion at time $t+1$. For the ideological objective, $\bar{o}^{E}_{\geqslant 0.5}$ is the mean opinion of the population that expresses approval; a value closer to 1 means higher the rewards for the ideological organization. The reward structure is also constructed in this way to make both of them bounded in the interval [-1,1]. 
\end{itemize}

\subsubsection{Optimal signaling policy}
\label{sec:optimal_signalling}
Based on the reward structure of the two organizations, we solve the MDP using the Value Iteration algorithm \citep{sutton2018reinforcement} after discretizing the state space. We choose this algorithm since the simplicity is sufficient for demonstration, however, for a more complex application in the real-world, such as when a media organization needs to determine the policy for their choice of content across different issues, a more sophisticated approach may be necessary. In Fig. \ref{fig:inst_rewards}, we plot the reward heatmap for the entire signaling space for both sets of signaling distributions (approval and disapproval). The $y$ axis shows the descriptive beliefs about the disapproval group's opinions in the left panels and the belief about the approval group's opinions in the right panels. The $x$ axis shows the signal values for the participatory organization in panels A and B, and the ideological organization in panels C and D. For the ideological organization, we show the solution for an organization that aims to move expressed opinion toward approval. The brighter regions in the heatmap indicate higher rewards. We can see from the plot that the organizational signal has to align with the descriptive beliefs (brighter section near the diagonal). This is not surprising since this follows straightforwardly from the bounded confidence constraints. Due to those constraints on the generated signals, signals further away from the descriptive beliefs fetch lower rewards for the organization since the receivers ignore those signals.\par
Given this constraint, however, we can see the different strategies that ideological and participatory organizations pursue. Participatory organizations want to moderate beliefs about both in-group and out-group members: when players expect others to hold moderate beliefs, they also expect lower rhetoric. This encourages more to share their views, promoting the organization's participation goal. But ideological organizations want to move the mean of expressed opinions toward their preferred extreme (which is approval, in the example shown in Fig. \ref{fig:inst_rewards}). This requires inducing moderates in the approval group to remain silent. This is achieved by causing approval group members to believe that members of both groups hold more extreme views than they do and hence that they will engage in more extreme rhetoric; moderate approval group members do not get a high enough reward from expressing their opinions, in the face of opposing rhetoric and a substitution effect when their in-group engages in rhetoric, to warrant saying anything. The optimal signal when the ideological organization believes that the out-group's true belief is around $0.4$, for example, is around $0.2$ and the optimal signal when the in-group is believed to be at $0.7$ is close to $0.9$. It is also interesting to note that when the descriptive belief about the out-group is at the most extreme (close to zero), then the optimal signal for the ideological organization is to moderate that effect to prevent even extreme opinion holders from staying silent. We see this from the bright reward spot on the upper right corner in panel C.
%A notable departure from this pattern is the out-group signalling by ideological organizations. Ideological organizations get higher rewards when the opinion of the population is towards one extreme (in the shown case, one of approval). The organization can achieve that by suppressing the moderate opinion holders in the approval group since that reduces the mean of the approval opinions expressed in the population. The reward map indicates that the best strategy for such organizations is to make the out-group seem more extreme. As the beliefs about the out-group become more extreme, the estimated equilibrium rhetorical intensity of the group is also higher. In response, the optimal rhetorical intensity is lower for moderate opinion holders, thereby more likely to stay silent than extreme opinion holders. The ideological organization can, therefore, use this strategy to move the mean expressed opinion of the approval group toward one extreme. 
Participatory organizations, on the other hand, want to make it seem that both the approval and disapproval groups are more moderate than they are. A participatory organization that believes the out-group is at $0.4$ will send signals close to $0.5$; if it believes the in-group is at $0.8$ it will send signals as close as possible to $0.6$.  That strategy ensures that the equilibrium rhetorical intensity of the group is lower and, therefore, it is optimal for even moderate opinion holders to express their opinions with a higher rhetorical intensity. \par 
Another important aspect to note here is that ideological organizations generally get higher rewards from signaling out-group opinions than in-group ones. (The brightest areas on the out-group heatmap correspond to higher values than the values for the brightest areas on in-group heatmap.) Meanwhile, for participatory organizations, there isn't a significant difference between out-group opinion and in-group opinion signaling. We can connect this observation to the empirical finding that in an ideological setting, content about political opponents is much more likely to be shared on social media \citep{rathje2021out}.
\begin{figure*}[!htbp]
\centering
         \includegraphics[width=\textwidth]{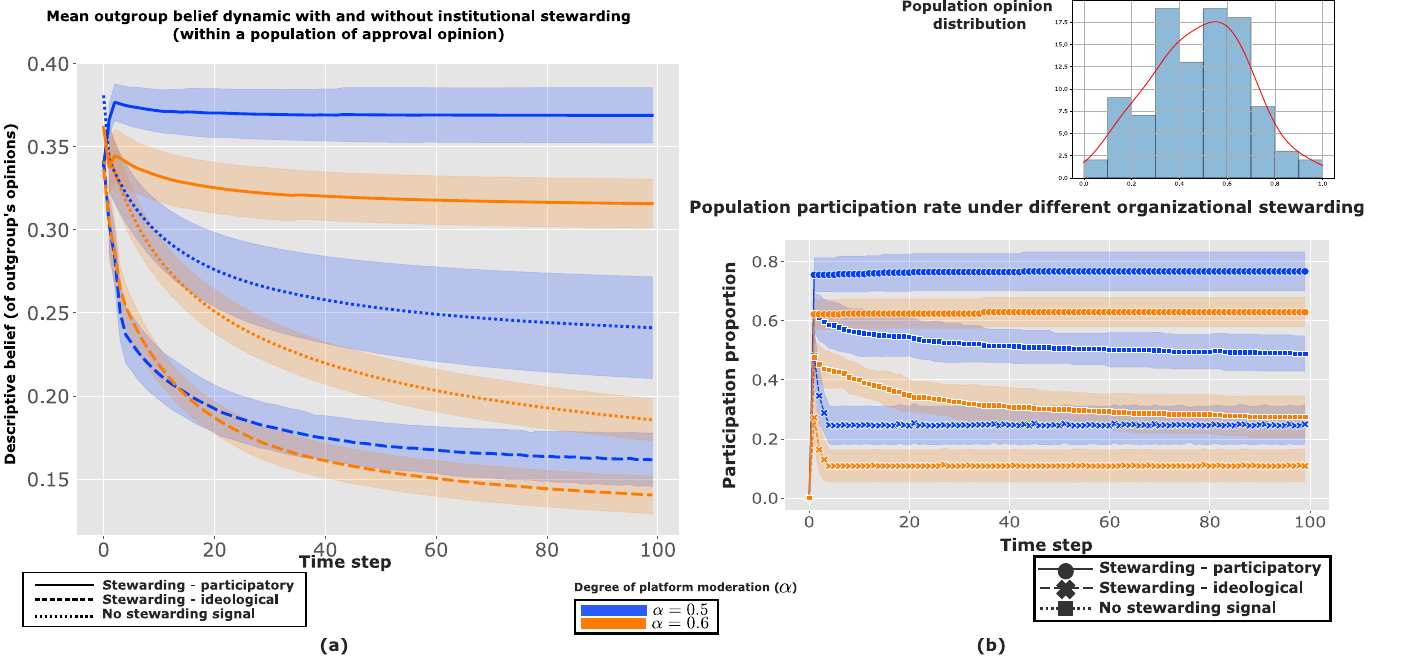}
\caption{Effect of organizational stewarding (participatory and ideological) on (a) descriptive beliefs about out-group's opinions and (b) rate of participation within a population. Shown here for the population of approval opinion holders. Stewarding under participatory and ideological organization is shown in continuous and dashed lines, respectively. Belief dynamic under no organizational stewarding is shown with dotted lines.}
\label{fig:opt_sig}
\end{figure*}
\section{Simulation of organizational stewarding}
\label{sec:long_run_dynamic}
Next, we analyze the long-run dynamics of the effects of organizational stewarding by ideological and participatory organizations using computational simulation. Each of the steps of the organization stewarding process described in Sec. \ref{sec:sub_inst_signal} is simulated as follows:
\begin{itemize}[wide, labelwidth=!, labelindent=0pt]
    \item \textit{Organizational signal generation:} We solve the optimization problem for ideological and participatory organizations as described in the previous section using the Value Iteration algorithm \citep{sutton2018reinforcement} which generates an optimal signaling policy for the two organizations.
    \item \textit{Belief update (from organization signal):} In order to calculate the posterior beliefs in Eqn. \ref{eqn:belief_update}, we use Monte Carlo estimation for Bayesian posteriors \citep{robert1999monte}.
    \item \textit{Opinion expression:} This step requires estimating the equilibrium rhetorical intensity from the perspective of each member of the population based on their own private type, that is, their opinion. As mentioned earlier, we use the equivalence between \emph{ex-ante} and \emph{ex-interim} Bayesian Nash equilibrium to simulate this process. Simply stated, the equivalence says that \emph{ex-interim} response averaged over types is the same as the \emph{ex-ante} response of the equilibrium (Lemma 6.1 in \citep{fujiwara2015bayesian}). For simulation, as a first step, this involves estimating the \emph{ex-ante} Bayesian Nash equilibrium of the opinion expression game played by two representative players with mean population opinion based on the prior beliefs about the opinion distribution. We calculate this by estimating the intersection of the Best Response curves of Eqns \ref{eqn:br_eqns_a} and \ref{eqn:br_eqns_b}. Since finding a closed-form solution for the curves is not feasible due to the transcendental nature of the equations, we use numerical root finding (Bisection) to approximate the intersection of the curves. In the second step, once this value is estimated, we simulate the individual rhetorical intensity response to the \emph{ex-ante} equilibrium by substituting it in Eqn. \ref{eqn:br_eqns_a} as the non-focal player's rhetorical intensity.  
    \item \textit{Belief update (from community signals):}  We model the beliefs about the approval and disapproval group in the population using a Beta distribution with parameters $a=5,b=3$ and $a=3,b=5$, respectively. Since the Bayesian update of a Beta distribution has a closed form solution, the updated distribution of the belief of the approval group's opinion is calculated as $Beta(a+o^{E}_{\geqslant 0.5},b+(1-o^{E}_{\geqslant 0.5}))$ where $o^{E}_{\geqslant 0.5}$ is the mean expressed opinion of the approval.
\end{itemize}

We run the simulation with a population of $N=100$ agents with opinions drawn from a bimodal Gaussian distribution with parameters $\mu_{1}=0.4,\mu_{2}=0.6,\sigma_{\{1,2\}}=0.2$, mixture coefficient 0.5. The rhetorical intensity threshold at which an individual stays silent ($\gamma^{\text{silence}}$) is set at 0.3. \par
We run the simulations under homogeneous beliefs, which means that within a group (approval or disapproval), all agents share the same beliefs about the population's mean approval and disapproval opinions. We run 10 batches of simulations for 100 timesteps. We also run three sets of simulations, with each set corresponding to the dynamic under the participatory, ideological, and \emph{no organizational} stewarding. For the \emph{no organizational} stewarding run, we include only steps 3 and 4 of the organizational stewarding process, eliminating the organizational signals completely and updating the beliefs solely based on the community interactions.\par
Fig. \ref{fig:opt_sig} shows the plot of the mean and standard deviation of the following set of attributes of the population calculated across the batches of run: i) descriptive belief about the disapproval group's opinion from the perspective of the approval group in Fig. \ref{fig:opt_sig}a; ii) participation rate (measured as the proportion of the population who express their opinion, either of approval or disapproval) in Fig. \ref{fig:opt_sig}b. The opinion distribution from which the simulation was run is shown in the inset of the Figure. We also repeat the simulation for different values of $\alpha = 0.5, 0.6$, which models different degrees of leniency in platform moderation policy. The dynamic of the descriptive beliefs and participation rate under participatory, ideological, organization is shown with a smooth and dashed line, respectively. The dynamic without any kind of organization stewarding is shown in a dotted line. 
Although we show the plots for the beliefs of the approval group in Fig. \ref{fig:opt_sig} a, the beliefs of the disapproval group have identical patterns, but with the range of opinion values flipped for the in-group and out-group beliefs.\par
Based on the simulations, we see a general trend that participation under the stewarding of participatory organizations remains stable at high levels. The participation rates are much lower without any organizational stewarding and under an ideological organization. This effect is exacerbated by a stricter platform moderation policy too. This is because, under a strict moderation policy uniformly applied to the entire population irrespective of the particular opinion, the moderate opinion holders are more likely to stay silent as moderation costs increase, given the lower utility they enjoy from expressing their opinion. This result provides a key insight: even though a more lenient moderation practice may improve participation (participation rate is higher for $\alpha=0.5$ compared to $\alpha=0.6$); there is a confounding effect from the type of organizational stewarding. In other words, the increased participation can be offset if the community is stewarded by an ideological organization. We can see this reflected in participation rate under ideological stewarding with $\alpha=0.5$ compared to participatory stewarding with $\alpha=0.6$. We also compare the participation rate without any stewarding and observe that ideological stewarding is worse than under no organizational stewarding at all. \par 
With respect to the beliefs about the out-group opinions shown in Fig. \ref{fig:opt_sig}a, we see that stewarding under ideological organizations also produces the most distorting effect on the beliefs, where the population, over time, believes that the out-group is more extreme than it really is. This effect is primarily driven by the differences in participation between the moderate and extreme opinion holders. Meanwhile, under participatory organization stewarding, the distortion of beliefs is minimal. \par
We also note the effect of a moderation policy applied uniformly for all opinion holders. When such a policy is uniformly stricter for everyone, the distortion of belief becomes worse. This is because the increased cost affects moderated more, resulting in higher non-participation of the moderate opinion holders and thereby results in greater distortion of beliefs. This is an important insight into the role of platform moderation: simply adjusting the moderation levels in a coarse grained way might not lead to improvements in either participation or distortion of second order beliefs about the out-group.\par 

\begin{figure*}[t]
\centering
         \includegraphics[width=0.75\textwidth]{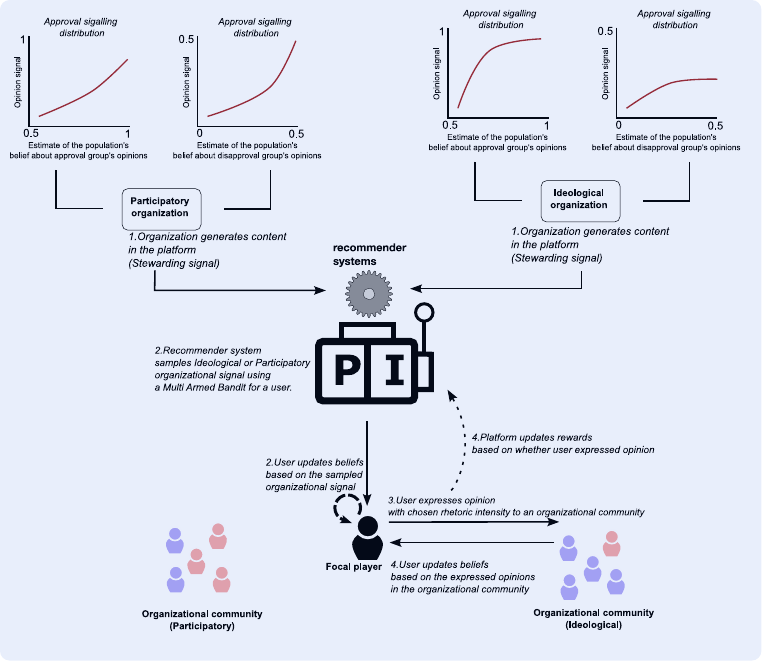}
\caption{Extended schematic representation of the organizational stewarding process in Fig. \ref{fig:single_inst_stewarding_diag} with the addition of participatory and ideological organizations, recommender systems, and organizational communities.}
\label{fig:institutional_selection}
\end{figure*}
\section{Platforms and organizational communities}
\label{sec:multi_inst}
While the analysis of a single organization within a population with homogeneous beliefs offers valuable insights, the reality of online platforms is much more complex. On platforms like Reddit, Twitter, and YouTube, digital communities are formed around a central theme, such as subreddits on Reddit, influential accounts on Twitter (X), and channels on YouTube.
In many of these digital communities, members share, discuss, and enforce their perceptions of right and wrong, often along a partisan axis \citep{conover2011political, waller2021quantifying}. This speaks to the presence of a diverse population with heterogeneous beliefs and multiple viewpoint organizations, each stewarding the beliefs of the community that falls within its scope. In this section, using the organizational stewarding process presented earlier, we develop a model of organizational community formation for both participatory and ideological organizations within a platform. Furthermore, we analyze the characteristics of the beliefs of each organizational community and show how ideological organizational communities give rise to the phenomenon of false polarization, i.e., individuals in ideological communities hold more extreme opinions and, at the same time, perceive the out-group to be more extreme than in reality \citep{levendusky2016mis}. \par

\subsection{Organizational communities} For most online platforms, it is common to use a recommendation algorithm that connects users with communities or accounts that match their interests and engagement patterns. Although there are several differences between platforms in terms of how these processes are implemented in practice, we use a minimal set of key entities common to most platforms: organizations, users, and the platform. \par 

The role of organizations and users is modeled as presented earlier in the organizational stewarding model. In reality, these organizations are often channels, subreddits, accounts etc. that generate information signals about approval and disapproval groups on normative issues. In this section, we use three categories of organizations in our simulation of the population under multiple organizations: one participatory organization, one ideological organization that has the objective of moving the expressed opinion towards approval, and one ideological organization that has the objective of moving the expressed opinion towards disapproval. As before, users within the population are modeled as receivers of organizational signals and they express opinion within the digital space provided by the platform. \\

\subsection{Modeling the platform} 
The platform is an intermediary between the organizations and the user, and it is responsible for the following set of functions in our model: \\
i. \textit{Content recommendation} -- A recommender system that recommends an organization and the organizational signal to a user based on their past interactions with the organization. In practice, a platform often uses a combination of various objectives, including engagement and diversity, to fine-tune the recommender system for a given user \citep{stray2022building}, however, in our model, the goal of the recommender system is to recommend an organization to a user to maximize the probability that the user expresses their opinion on the platform. In terms of technology, Bandit algorithms have become one of the mainstays of recommender systems \citep{silva2022multi, parapar2021diverse}. With that motivation, we use a vanilla UCB algorithm \citep{auer2002finite} to simulate the platform's recommender engine, which recommends a participatory or ideological organizational signal to a user. \\
ii. \textit{Digital space } -- The platform maintains a digital space for each organization where users react to organizational signals by expressing their opinions and interacting with others in the community. One can think of this digital space as the comment and user interaction sections available in most platforms. We refer to this digital space for an organization as the organizational community.\\
iii. \textit{Moderation} -- The platforms is also responsible for moderating the content within the digital space. In our case, this is modeled as the platform's control of the cost parameter $\alpha$. \\

With the main components of the models in place, we present the sequence of steps involved in organizational community formation. The main idea behind the process is that the user's interaction with the content generated by a particular organization is visible and registered with the recommender system when the user expresses their opinion based on the organizational signal. The recommender system, in turn, learns to recommend the particular organizations based on the user's past participation and the predicted attributes of the user, in this case, the beliefs about the in-group and out-group. The objective of the platform is to maximize user participation within the platform. This is done by selecting the right organization for the user and ensuring that the signals it recommends from the selected organization induce the user to participate by expressing their opinion. Since participatory organizations generate signals that allow moderate opinion holders to participate in their community, and ideological organizational signals suppress moderate opinion participation, we see a higher representation of moderate opinion holders in participatory organizational communities. We elaborate these steps in more detail below.\par
\noindent \emph{Initial setup:} Various participatory and ideological organizations in the platform generate signals based on their optimal signaling strategy. On a given issue, due to the diversity of the content and organizations in the platform, we assume that there is content available for the recommender system to present to a user from a wide spectrum of opinions for both approval and disapproval. For every user, the platform maintains a history of past interaction with two types of organization: the participatory organization and the ideological organization whose goal is to move the opinion in the direction that aligns with the user's opinion. We outline the modeling steps in more detail below.  \par
\noindent At each time step $t$:
\begin{enumerate}[wide, labelindent=0pt]
    \item For each user $i$, the platform's recommender system samples a participatory or an ideological organizational signal by pulling one of the two arms of the Bandit based on the UCB values corresponding to each arm.
    \item The user receives the signal from the organization sampled by the recommender system. After a particular organization is chosen by the system, there is still the question of the type of the information signal: whether the signal is about the approval or the disapproval group's opinion since each organization maintain two separate signaling distributions. We model this choice as a stochastic policy where each type of information (approval or disapproval group's opinion) is selected in proportion to the maximum reward the organization can achieve from either of the two distributions. We model this choice based on the assumption that the overall proportion of information content about the two groups' opinions generated by the organization will be proportional to the rewards that each type of information fetches the organization. Once the user receives the signal from the organization, they update their belief based on the signal received. \par
\item The user makes the choice to either \emph{express} their opinion or \emph{stay silent}. This decision is determined by the equilibrium rhetorical intensity discussed in step \emph{opinion expression} in Sec. \ref{sec:sub_inst_signal}. Specifically, the user expresses if their best-response rhetorical intensity is higher than a threshold $\gamma^{\text{silence}}=0.3$.
\item Expression of the user's opinion is recorded in the digital space provided by the platform for the corresponding organizations. Each user who expresses their opinion in the digital space is added to the organizational community maintained by the platform.
\item Each user who was added to the organizational community observes the opinion of other users within the community and updates their corresponding descriptive beliefs. For the participatory institutional communities, these community signals consist of both approval and disapproval opinions since both types of users interact in the participatory institutions. Whereas, for the ideological communities, since only one of the groups interacts in their corresponding community, the community signal is only either approval or disapproval opinions.  
    \item The recommender system updates the rewards for the corresponding arm to +1 if the user expressed based on the sampled signal or 0 if they did not express.

\end{enumerate}
\begin{table*}[htbp]
\centering
\begin{tabular}{p{2cm}p{1cm}p{4cm}p{4cm}p{4cm}}
\toprule
\textbf{Community Category} & \textbf{Opinion Group} & \textbf{Opinion (Mean ± SD)} & \textbf{Belief about out-group (Mean ± SD)} & \textbf{Belief about in-group (Mean ± SD)} \\
\midrule
\multirow{1}{*}{Participatory} & $\geq0.5$ & \(0.666 \pm 0.114\) & \(0.378 \pm 0.066\) & \(0.5 \pm 0.001\) \\
\midrule
\multirow{1}{*}{Ideological}   & $\geq0.5$ & \(0.703 \pm 0.104\) & \(0.180 \pm 0.093\) & \(0.707 \pm 0.002\) \\
\midrule
\multirow{1}{*}{Silent}        & $\geq0.5$ & \(0.516 \pm 0.013\) & \(0.161 \pm 0.134\) & \(0.907 \pm 0.11\) \\
\bottomrule
\end{tabular}
\caption{Summary of opinion and belief about out-group and in-group for participatory and ideological communities along with the characteristics of agents that do not participate in any community. The true population mean opinion is 0.4 for disapproval and 0.6 for approval.}
\label{tab:opinion_and_belief_diff}
\end{table*}

\begin{table*}[htbp]
\centering
\begin{tabular}{lp{1cm}p{4cm}p{4cm}p{4cm}}
\toprule
\textbf{Comparison} & \textbf{Opinion Group} & \textbf{Cohen's \(d\) (Opinion)} & \textbf{Cohen's \(d\) (Out Belief)} & \textbf{Cohen's \(d\) (In Belief)} \\
\midrule
Ideological vs Participatory & \(\geq0.5\) & \(0.338^{*}\) & \(-2.488^{***}\) & \(100.459^{***}\) \\
Ideological vs Silent & \(\geq0.5\) & \(1.852^{***}\) & \(0.201^{*}\) & \(-6.732^{***}\) \\
Participatory vs Silent & \(\geq0.5\) & \(1.345^{**}\) & \(3.014^{***}\) & \(-15.548^{***}\) \\
\bottomrule
\end{tabular}
\caption{Results of Cohen's d estimate for effect size for comparison of opinion, out-group belief, and in-group belief for each organizational community. $*$: small to medium, $**$: medium to large effect size, $***$: large to very large effect size.}
\label{tab:cohens_d}
\end{table*}

\subsection{Organizational community characteristics} 
We can analyze the characteristics of the organizational communities that form based on the combination of the organization signaling, user expression of opinion, and recommender systems described above. Specifically, we answer the following two questions about the characteristics of the community:
\begin{itemize}[wide, labelindent=0pt]
    \item{ Opinion differences: \textit{How do the opinions differ among the members in each organizational community?}}
    \item{ Second-order beliefs about out-group opinions: \textit{How do the beliefs about the out-group differ among the members in each organizational community?}} \item{ Second-order beliefs about in-group opinions: \textit{How do the beliefs about the in-group differ among the members in each organizational community?}}
\end{itemize}
\noindent \textbf{Simulation setup:} Similar to Sec. \ref{sec:long_run_dynamic}, to answer the above questions, we run simulations with the multiple institutions setup and heterogeneous beliefs, that is, users hold different beliefs about their in-group and out-group sub-population. The population size, opinion distribution are same as in simulation in Sec. \ref{sec:long_run_dynamic}), i.e, population of $N=100$ for $T=100$ time steps with opinions drawn from a bimodal Gaussian distribution with parameters $\mu_{1}=0.4,\mu_{2}=0.6,\sigma_{\{1,2\}}=0.2$, mixture coefficient 0.5. We model the heterogeneous beliefs about the approval and disapproval group in the population using two randomly sampled Beta distribution for each user (corresponding to the belief about in-group and out-group) such that the mean value of the parameters $a$ and $b$ for the beliefs of approval and disapproval for the whole population are $a=5,b=3$ and $a=3,b=5$, respectively. The threshold at which an individual stays silent is 0.3. We answer the questions about the characteristics of the community in the second half of the simulation run in order for the recommender system to stabilize the learning for each user.\par
Table \ref{tab:opinion_and_belief_diff} shows the difference in opinion distribution for the participatory and ideological communities. There is a statistically significant difference in the distribution of opinions between the two communities, with the mean opinion of the ideological community more extreme than that of participatory communities. This difference results directly from higher likelihood of moderate opinion holders staying silent from signals of ideological organizations, and consequently, the recommender systems matching them to participatory organizations more often than extreme opinion holders. Next, we look at the differences in out-group beliefs of the two communities. We select the approval group for analysis, although the characteristics are the same for both groups. We see a significant difference between the two communities with respect to out-group and in-group beliefs; beliefs about both in-group and out-group are distorted towards the extreme for ideological communities. However, for participatory communities, the belief about in-group is more moderate than the true opinion. This relationship results from a combination of the optimal signaling schemes constructed by organizations and the recommender system that stratifies users to different communities. As extreme opinion holders participate more in ideological communities, their out-group belief also drift towards one extreme, a dynamic we observed in Sec. \ref{sec:long_run_dynamic}). \par
Finally, our analysis also sheds light on the characteristics of the population that does not participate in either of the communities. This population consists of users with moderate opinions and more extreme beliefs about others' opinions. This is because more moderate opinion holders with more extreme beliefs about the population are likelier to stay silent. Although the recommender system samples different organizations, it fails to bring those users to participate in either organization's community.

\iffalse
\begin{figure}[t]
\centering
\includegraphics[width=0.5\textwidth]{}
\caption{out-group beliefs (of the approval group) within each organizational community. $x$ axis shows agents' opinions and $y$ axis shows corresponding out-group belief of the corresponding agent. Agents with more extreme opinions hold more extreme belief about out-group.}
\label{fig:perception_gap}
\end{figure}
\begin{figure}[t]
 \centering
\includegraphics[width=0.5\textwidth]{}
\caption{Opinion distribution within each organizational community. Idelogical organizational communities have more extreme opinion repreentation that participatory communities.}
\label{fig:opinion_multiple_inst}
\end{figure}
\fi

\section{Conclusion and discussion}
In this paper, we present a model that shows that differences in a population in the expression of opinions can cause polarization at the population and community levels, both in the opinions expressed and beliefs about others. The observed phenomenon arises from the differences in individual incentives for opinion expression, organizational incentives, and the recommender system of online platforms. As we have shown, polarization can arise not from a distortion in actual opinions but rather from rational choices about when and how to express opinions. Rhetorical intensity is shown to pay a key role: rhetoric generates payoffs for individuals with strong views but also inhibits expression by those with moderate views. If a platform attempts to moderate intense rhetoric, however, this may only exacerbate the problem: moderates will be less willing to incur the costs of rhetoric than those with extreme views.  

Importantly, we show these effects without assuming any change in the actual opinions held by individuals, only in the pattern of expression of opinions. This result highlights a risk from AI that has not been adequately appreciated: reliance on the expression of opinions on AI-powered social media platforms can distort perceptions in public debates and policy-making about the true distribution of views. That in turn can distort actual policy-making.  Additionally, distortions in the expression of opinion also distort the data on which AI models are trained, amplifying the effect of a shift to AI-mediated interaction. Moreover, correction for the distorting effect is difficult: there are many reasons that users do not participate in particular social media discussions and so the signal from silence is difficult to extract. 

The role we have identified for platform AI-based recommender systems opens up the possibility of designing mitigation strategies that do not require changing underlying opinions. We identify at least two such strategies for online platforms as follows:\\
\noindent \emph{A tailored approach to content moderation:} In our model, the difference in rhetorical intensity arises from the difference in utility each individual gains from opinion expression. We assumed platform moderation of rhetorical intensity imposes costs ($\alpha$) that are invariant with respect to the individual's opinion. If online platforms were able, however, to customize content moderation policy to allows more leniency (lower $\alpha$) in the expression of the opinion of moderate opinion holders, the resulting difference in optimal rhetorical intensity between extreme and moderate opinion holders could be corrected.\par
\noindent \emph{Prioritizing signals from participatory organizations:} In conjunction with a tailored moderation policy, platforms could also tailor their recommender strategy to prioritize signals from participatory organizations over ideological organizations for extreme opinion holders. Since extreme opinion holders still express their opinion even under signals from participatory organizations, a higher proportion of participatory organizational signals can bring their beliefs about out-group opinions more in line with less polarized participatory organizational communities.
\par Both of these strategies require, however, that platforms engage in a form of content-based regulation of rhetoric. People who oppose animal-testing in any context would be more restricted in their platform speech than those who believe animal-testing is acceptable in some contexts but not others. That may itself be unacceptable for political communities. The only alternative would seem to be aggressive regulation of rhetoric across the board. In a sense, this is what political communities have historically sought to achieve, through strong rules of civility in political forums (such as legislatures) and strong editorial norms in publications such as widely-distributed newspapers. The challenge in the era of AI-based opinion platforms is to establish such norms in a highly decentralized setting. Our results emphasize that the benefits to developing such norms is important not merely to support individual expression rights but also to ensure the integrity of public and policy-makers perceptions of true public opinion on contentious policy matters and the integrity of the data on which our large language models are trained.
\section{Acknowledgments}
We thank the following people for their feedback on this work: Graham Noblit, Valerie Platsko, Kathryn E. Spier, Peter Marbach, Ashton Anderson.
\bibliographystyle{ACM-Reference-Format}
\bibliography{sample-base}

%%% -*-BibTeX-*-
%%% Do NOT edit. File created by BibTeX with style
%%% ACM-Reference-Format-Journals [18-Jan-2012].

\begin{thebibliography}{70}

%%% ====================================================================
%%% NOTE TO THE USER: you can override these defaults by providing
%%% customized versions of any of these macros before the \bibliography
%%% command.  Each of them MUST provide its own final punctuation,
%%% except for \shownote{}, \showDOI{}, and \showURL{}.  The latter two
%%% do not use final punctuation, in order to avoid confusing it with
%%% the Web address.
%%%
%%% To suppress output of a particular field, define its macro to expand
%%% to an empty string, or better, \unskip, like this:
%%%
%%% \newcommand{\showDOI}[1]{\unskip}   % LaTeX syntax
%%%
%%% \def \showDOI #1{\unskip}           % plain TeX syntax
%%%
%%% ====================================================================

\ifx \showCODEN    \undefined \def \showCODEN     #1{\unskip}     \fi
\ifx \showDOI      \undefined \def \showDOI       #1{#1}\fi
\ifx \showISBNx    \undefined \def \showISBNx     #1{\unskip}     \fi
\ifx \showISBNxiii \undefined \def \showISBNxiii  #1{\unskip}     \fi
\ifx \showISSN     \undefined \def \showISSN      #1{\unskip}     \fi
\ifx \showLCCN     \undefined \def \showLCCN      #1{\unskip}     \fi
\ifx \shownote     \undefined \def \shownote      #1{#1}          \fi
\ifx \showarticletitle \undefined \def \showarticletitle #1{#1}   \fi
\ifx \showURL      \undefined \def \showURL       {\relax}        \fi
% The following commands are used for tagged output and should be
% invisible to TeX
\providecommand\bibfield[2]{#2}
\providecommand\bibinfo[2]{#2}
\providecommand\natexlab[1]{#1}
\providecommand\showeprint[2][]{arXiv:#2}

\bibitem[Acemoglu et~al\mbox{.}(2023)]%
        {acemoglu2023model}
\bibfield{author}{\bibinfo{person}{Daron Acemoglu}, \bibinfo{person}{Asuman Ozdaglar}, {and} \bibinfo{person}{James Siderius}.} \bibinfo{year}{2023}\natexlab{}.
\newblock \showarticletitle{A model of online misinformation}.
\newblock \bibinfo{journal}{\emph{Review of Economic Studies}} (\bibinfo{year}{2023}), \bibinfo{pages}{rdad111}.
\newblock


\bibitem[Aichner et~al\mbox{.}(2021)]%
        {aichner2021twenty}
\bibfield{author}{\bibinfo{person}{Thomas Aichner}, \bibinfo{person}{Matthias Gr{\"u}nfelder}, \bibinfo{person}{Oswin Maurer}, {and} \bibinfo{person}{Deni Jegeni}.} \bibinfo{year}{2021}\natexlab{}.
\newblock \showarticletitle{Twenty-five years of social media: a review of social media applications and definitions from 1994 to 2019}.
\newblock \bibinfo{journal}{\emph{Cyberpsychology, behavior, and social networking}} \bibinfo{volume}{24}, \bibinfo{number}{4} (\bibinfo{year}{2021}), \bibinfo{pages}{215--222}.
\newblock


\bibitem[Alatawi et~al\mbox{.}(2021)]%
        {alatawi2021survey}
\bibfield{author}{\bibinfo{person}{Faisal Alatawi}, \bibinfo{person}{Lu Cheng}, \bibinfo{person}{Anique Tahir}, \bibinfo{person}{Mansooreh Karami}, \bibinfo{person}{Bohan Jiang}, \bibinfo{person}{Tyler Black}, {and} \bibinfo{person}{Huan Liu}.} \bibinfo{year}{2021}\natexlab{}.
\newblock \showarticletitle{A survey on echo chambers on social media: Description, detection and mitigation}.
\newblock \bibinfo{journal}{\emph{arXiv preprint arXiv:2112.05084}} (\bibinfo{year}{2021}).
\newblock


\bibitem[Algara and Zur(2023)]%
        {algara2023downsian}
\bibfield{author}{\bibinfo{person}{Carlos Algara} {and} \bibinfo{person}{Roi Zur}.} \bibinfo{year}{2023}\natexlab{}.
\newblock \showarticletitle{The Downsian roots of affective polarization}.
\newblock \bibinfo{journal}{\emph{Electoral Studies}}  \bibinfo{volume}{82} (\bibinfo{year}{2023}), \bibinfo{pages}{102581}.
\newblock


\bibitem[Arora et~al\mbox{.}(2022)]%
        {arora2022polarization}
\bibfield{author}{\bibinfo{person}{Swapan~Deep Arora}, \bibinfo{person}{Guninder~Pal Singh}, \bibinfo{person}{Anirban Chakraborty}, {and} \bibinfo{person}{Moutusy Maity}.} \bibinfo{year}{2022}\natexlab{}.
\newblock \showarticletitle{Polarization and social media: A systematic review and research agenda}.
\newblock \bibinfo{journal}{\emph{Technological Forecasting and Social Change}}  \bibinfo{volume}{183} (\bibinfo{year}{2022}), \bibinfo{pages}{121942}.
\newblock


\bibitem[Auer et~al\mbox{.}(2002)]%
        {auer2002finite}
\bibfield{author}{\bibinfo{person}{Peter Auer}, \bibinfo{person}{Nicolo Cesa-Bianchi}, {and} \bibinfo{person}{Paul Fischer}.} \bibinfo{year}{2002}\natexlab{}.
\newblock \showarticletitle{Finite-time analysis of the multiarmed bandit problem}.
\newblock \bibinfo{journal}{\emph{Machine learning}}  \bibinfo{volume}{47} (\bibinfo{year}{2002}), \bibinfo{pages}{235--256}.
\newblock


\bibitem[Bail et~al\mbox{.}(2018)]%
        {bail2018exposure}
\bibfield{author}{\bibinfo{person}{Christopher~A Bail}, \bibinfo{person}{Lisa~P Argyle}, \bibinfo{person}{Taylor~W Brown}, \bibinfo{person}{John~P Bumpus}, \bibinfo{person}{Haohan Chen}, \bibinfo{person}{MB~Fallin Hunzaker}, \bibinfo{person}{Jaemin Lee}, \bibinfo{person}{Marcus Mann}, \bibinfo{person}{Friedolin Merhout}, {and} \bibinfo{person}{Alexander Volfovsky}.} \bibinfo{year}{2018}\natexlab{}.
\newblock \showarticletitle{Exposure to opposing views on social media can increase political polarization}.
\newblock \bibinfo{journal}{\emph{Proceedings of the National Academy of Sciences}} \bibinfo{volume}{115}, \bibinfo{number}{37} (\bibinfo{year}{2018}), \bibinfo{pages}{9216--9221}.
\newblock


\bibitem[Bakshy et~al\mbox{.}(2015)]%
        {bakshy2015exposure}
\bibfield{author}{\bibinfo{person}{Eytan Bakshy}, \bibinfo{person}{Solomon Messing}, {and} \bibinfo{person}{Lada~A Adamic}.} \bibinfo{year}{2015}\natexlab{}.
\newblock \showarticletitle{Exposure to ideologically diverse news and opinion on Facebook}.
\newblock \bibinfo{journal}{\emph{Science}} \bibinfo{volume}{348}, \bibinfo{number}{6239} (\bibinfo{year}{2015}), \bibinfo{pages}{1130--1132}.
\newblock


\bibitem[Boxell et~al\mbox{.}(2017)]%
        {boxell2017internet}
\bibfield{author}{\bibinfo{person}{Levi Boxell}, \bibinfo{person}{Matthew Gentzkow}, {and} \bibinfo{person}{Jesse~M Shapiro}.} \bibinfo{year}{2017}\natexlab{}.
\newblock \bibinfo{booktitle}{\emph{Is the internet causing political polarization? Evidence from demographics}}.
\newblock \bibinfo{type}{{T}echnical {R}eport}. \bibinfo{institution}{National Bureau of Economic Research}.
\newblock


\bibitem[Boxell et~al\mbox{.}(2022)]%
        {boxell2022cross}
\bibfield{author}{\bibinfo{person}{Levi Boxell}, \bibinfo{person}{Matthew Gentzkow}, {and} \bibinfo{person}{Jesse~M Shapiro}.} \bibinfo{year}{2022}\natexlab{}.
\newblock \showarticletitle{Cross-country trends in affective polarization}.
\newblock \bibinfo{journal}{\emph{Review of Economics and Statistics}} (\bibinfo{year}{2022}), \bibinfo{pages}{1--60}.
\newblock


\bibitem[Bramson et~al\mbox{.}(2017)]%
        {bramson2017understanding}
\bibfield{author}{\bibinfo{person}{Aaron Bramson}, \bibinfo{person}{Patrick Grim}, \bibinfo{person}{Daniel~J Singer}, \bibinfo{person}{William~J Berger}, \bibinfo{person}{Graham Sack}, \bibinfo{person}{Steven Fisher}, \bibinfo{person}{Carissa Flocken}, {and} \bibinfo{person}{Bennett Holman}.} \bibinfo{year}{2017}\natexlab{}.
\newblock \showarticletitle{Understanding polarization: Meanings, measures, and model evaluation}.
\newblock \bibinfo{journal}{\emph{Philosophy of science}} \bibinfo{volume}{84}, \bibinfo{number}{1} (\bibinfo{year}{2017}), \bibinfo{pages}{115--159}.
\newblock


\bibitem[Cinelli et~al\mbox{.}(2021)]%
        {cinelli2021echo}
\bibfield{author}{\bibinfo{person}{Matteo Cinelli}, \bibinfo{person}{Gianmarco De~Francisci~Morales}, \bibinfo{person}{Alessandro Galeazzi}, \bibinfo{person}{Walter Quattrociocchi}, {and} \bibinfo{person}{Michele Starnini}.} \bibinfo{year}{2021}\natexlab{}.
\newblock \showarticletitle{The echo chamber effect on social media}.
\newblock \bibinfo{journal}{\emph{Proceedings of the National Academy of Sciences}} \bibinfo{volume}{118}, \bibinfo{number}{9} (\bibinfo{year}{2021}), \bibinfo{pages}{e2023301118}.
\newblock


\bibitem[Conover et~al\mbox{.}(2011)]%
        {conover2011political}
\bibfield{author}{\bibinfo{person}{Michael Conover}, \bibinfo{person}{Jacob Ratkiewicz}, \bibinfo{person}{Matthew Francisco}, \bibinfo{person}{Bruno Gon{\c{c}}alves}, \bibinfo{person}{Filippo Menczer}, {and} \bibinfo{person}{Alessandro Flammini}.} \bibinfo{year}{2011}\natexlab{}.
\newblock \showarticletitle{Political polarization on twitter}. In \bibinfo{booktitle}{\emph{Proceedings of the international aaai conference on web and social media}}, Vol.~\bibinfo{volume}{5}. \bibinfo{pages}{89--96}.
\newblock


\bibitem[Dash et~al\mbox{.}(2022)]%
        {dash2022divided}
\bibfield{author}{\bibinfo{person}{Saloni Dash}, \bibinfo{person}{Dibyendu Mishra}, \bibinfo{person}{Gazal Shekhawat}, {and} \bibinfo{person}{Joyojeet Pal}.} \bibinfo{year}{2022}\natexlab{}.
\newblock \showarticletitle{Divided we rule: Influencer polarization on Twitter during political crises in India}. In \bibinfo{booktitle}{\emph{Proceedings of the International AAAI Conference on Web and Social Media}}, Vol.~\bibinfo{volume}{16}. \bibinfo{pages}{135--146}.
\newblock


\bibitem[DeGroot(1974)]%
        {degroot1974reaching}
\bibfield{author}{\bibinfo{person}{Morris~H DeGroot}.} \bibinfo{year}{1974}\natexlab{}.
\newblock \showarticletitle{Reaching a consensus}.
\newblock \bibinfo{journal}{\emph{Journal of the American Statistical association}} \bibinfo{volume}{69}, \bibinfo{number}{345} (\bibinfo{year}{1974}), \bibinfo{pages}{118--121}.
\newblock


\bibitem[DellaVigna and Kaplan(2007)]%
        {dellavigna2007fox}
\bibfield{author}{\bibinfo{person}{Stefano DellaVigna} {and} \bibinfo{person}{Ethan Kaplan}.} \bibinfo{year}{2007}\natexlab{}.
\newblock \showarticletitle{The Fox News effect: Media bias and voting}.
\newblock \bibinfo{journal}{\emph{The Quarterly Journal of Economics}} \bibinfo{volume}{122}, \bibinfo{number}{3} (\bibinfo{year}{2007}), \bibinfo{pages}{1187--1234}.
\newblock


\bibitem[Dorst(2023)]%
        {dorst2023rational}
\bibfield{author}{\bibinfo{person}{Kevin Dorst}.} \bibinfo{year}{2023}\natexlab{}.
\newblock \showarticletitle{Rational polarization}.
\newblock \bibinfo{journal}{\emph{Philosophical Review}} \bibinfo{volume}{132}, \bibinfo{number}{3} (\bibinfo{year}{2023}), \bibinfo{pages}{355--458}.
\newblock


\bibitem[Downs(1957)]%
        {downs1957economic}
\bibfield{author}{\bibinfo{person}{Anthony Downs}.} \bibinfo{year}{1957}\natexlab{}.
\newblock \showarticletitle{An economic theory of political action in a democracy}.
\newblock \bibinfo{journal}{\emph{Journal of political economy}} \bibinfo{volume}{65}, \bibinfo{number}{2} (\bibinfo{year}{1957}), \bibinfo{pages}{135--150}.
\newblock


\bibitem[Fagan(2017)]%
        {fagan2017systemic}
\bibfield{author}{\bibinfo{person}{Frank Fagan}.} \bibinfo{year}{2017}\natexlab{}.
\newblock \showarticletitle{Systemic social media regulation}.
\newblock \bibinfo{journal}{\emph{Duke L. \& Tech. Rev.}}  \bibinfo{volume}{16} (\bibinfo{year}{2017}), \bibinfo{pages}{393}.
\newblock


\bibitem[Fujiwara-Greve and Fujiwara-Greve(2015)]%
        {fujiwara2015bayesian}
\bibfield{author}{\bibinfo{person}{Takako Fujiwara-Greve} {and} \bibinfo{person}{Takako Fujiwara-Greve}.} \bibinfo{year}{2015}\natexlab{}.
\newblock \showarticletitle{Bayesian nash equilibrium}.
\newblock \bibinfo{journal}{\emph{Non-cooperative game theory}} (\bibinfo{year}{2015}), \bibinfo{pages}{133--151}.
\newblock


\bibitem[Gaisbauer et~al\mbox{.}(2020)]%
        {gaisbauer2020dynamics}
\bibfield{author}{\bibinfo{person}{Felix Gaisbauer}, \bibinfo{person}{Eckehard Olbrich}, {and} \bibinfo{person}{Sven Banisch}.} \bibinfo{year}{2020}\natexlab{}.
\newblock \showarticletitle{Dynamics of opinion expression}.
\newblock \bibinfo{journal}{\emph{Physical Review E}} \bibinfo{volume}{102}, \bibinfo{number}{4} (\bibinfo{year}{2020}), \bibinfo{pages}{042303}.
\newblock


\bibitem[Garimella et~al\mbox{.}(2018)]%
        {garimella2018political}
\bibfield{author}{\bibinfo{person}{Kiran Garimella}, \bibinfo{person}{Gianmarco De~Francisci~Morales}, \bibinfo{person}{Aristides Gionis}, {and} \bibinfo{person}{Michael Mathioudakis}.} \bibinfo{year}{2018}\natexlab{}.
\newblock \showarticletitle{Political discourse on social media: Echo chambers, gatekeepers, and the price of bipartisanship}. In \bibinfo{booktitle}{\emph{Proceedings of the 2018 world wide web conference}}. \bibinfo{pages}{913--922}.
\newblock


\bibitem[Golub and Jackson(2012)]%
        {golub2012homophily}
\bibfield{author}{\bibinfo{person}{Benjamin Golub} {and} \bibinfo{person}{Matthew~O Jackson}.} \bibinfo{year}{2012}\natexlab{}.
\newblock \showarticletitle{How homophily affects the speed of learning and best-response dynamics}.
\newblock \bibinfo{journal}{\emph{The Quarterly Journal of Economics}} \bibinfo{volume}{127}, \bibinfo{number}{3} (\bibinfo{year}{2012}), \bibinfo{pages}{1287--1338}.
\newblock


\bibitem[Haghtalab et~al\mbox{.}(2021)]%
        {haghtalab2021belief}
\bibfield{author}{\bibinfo{person}{Nika Haghtalab}, \bibinfo{person}{Matthew~O Jackson}, {and} \bibinfo{person}{Ariel~D Procaccia}.} \bibinfo{year}{2021}\natexlab{}.
\newblock \showarticletitle{Belief polarization in a complex world: A learning theory perspective}.
\newblock \bibinfo{journal}{\emph{Proceedings of the National Academy of Sciences}} \bibinfo{volume}{118}, \bibinfo{number}{19} (\bibinfo{year}{2021}), \bibinfo{pages}{e2010144118}.
\newblock


\bibitem[Hare and Poole(2014)]%
        {hare2014polarization}
\bibfield{author}{\bibinfo{person}{Christopher Hare} {and} \bibinfo{person}{Keith~T Poole}.} \bibinfo{year}{2014}\natexlab{}.
\newblock \showarticletitle{The polarization of contemporary American politics}.
\newblock \bibinfo{journal}{\emph{Polity}} \bibinfo{volume}{46}, \bibinfo{number}{3} (\bibinfo{year}{2014}), \bibinfo{pages}{411--429}.
\newblock


\bibitem[Helberger et~al\mbox{.}(2018)]%
        {helberger2018exposure}
\bibfield{author}{\bibinfo{person}{Natali Helberger}, \bibinfo{person}{Kari Karppinen}, {and} \bibinfo{person}{Lucia D’acunto}.} \bibinfo{year}{2018}\natexlab{}.
\newblock \showarticletitle{Exposure diversity as a design principle for recommender systems}.
\newblock \bibinfo{journal}{\emph{Information, communication \& society}} \bibinfo{volume}{21}, \bibinfo{number}{2} (\bibinfo{year}{2018}), \bibinfo{pages}{191--207}.
\newblock


\bibitem[Hsu et~al\mbox{.}(2020)]%
        {hsu2020news}
\bibfield{author}{\bibinfo{person}{Chin-Chia Hsu}, \bibinfo{person}{Amir Ajorlou}, {and} \bibinfo{person}{Ali Jadbabaie}.} \bibinfo{year}{2020}\natexlab{}.
\newblock \showarticletitle{News sharing, persuasion, and spread of misinformation on social networks}.
\newblock \bibinfo{journal}{\emph{Persuasion, and Spread of Misinformation on Social Networks (July 1, 2020)}} (\bibinfo{year}{2020}).
\newblock


\bibitem[Hu et~al\mbox{.}(2008)]%
        {hu2008collaborative}
\bibfield{author}{\bibinfo{person}{Yifan Hu}, \bibinfo{person}{Yehuda Koren}, {and} \bibinfo{person}{Chris Volinsky}.} \bibinfo{year}{2008}\natexlab{}.
\newblock \showarticletitle{Collaborative filtering for implicit feedback datasets}. In \bibinfo{booktitle}{\emph{2008 Eighth IEEE international conference on data mining}}. Ieee, \bibinfo{pages}{263--272}.
\newblock


\bibitem[Iyengar et~al\mbox{.}(2019)]%
        {iyengar2019origins}
\bibfield{author}{\bibinfo{person}{Shanto Iyengar}, \bibinfo{person}{Yphtach Lelkes}, \bibinfo{person}{Matthew Levendusky}, \bibinfo{person}{Neil Malhotra}, {and} \bibinfo{person}{Sean~J Westwood}.} \bibinfo{year}{2019}\natexlab{}.
\newblock \showarticletitle{The origins and consequences of affective polarization in the United States}.
\newblock \bibinfo{journal}{\emph{Annual review of political science}}  \bibinfo{volume}{22} (\bibinfo{year}{2019}), \bibinfo{pages}{129--146}.
\newblock


\bibitem[Jern et~al\mbox{.}(2014)]%
        {jern2014belief}
\bibfield{author}{\bibinfo{person}{Alan Jern}, \bibinfo{person}{Kai-Min~K Chang}, {and} \bibinfo{person}{Charles Kemp}.} \bibinfo{year}{2014}\natexlab{}.
\newblock \showarticletitle{Belief polarization is not always irrational.}
\newblock \bibinfo{journal}{\emph{Psychological review}} \bibinfo{volume}{121}, \bibinfo{number}{2} (\bibinfo{year}{2014}), \bibinfo{pages}{206}.
\newblock


\bibitem[Johnson et~al\mbox{.}(2014)]%
        {johnson2014logistic}
\bibfield{author}{\bibinfo{person}{Christopher~C Johnson} {et~al\mbox{.}}} \bibinfo{year}{2014}\natexlab{}.
\newblock \showarticletitle{Logistic matrix factorization for implicit feedback data}.
\newblock \bibinfo{journal}{\emph{Advances in Neural Information Processing Systems}} \bibinfo{volume}{27}, \bibinfo{number}{78} (\bibinfo{year}{2014}), \bibinfo{pages}{1--9}.
\newblock


\bibitem[Jost et~al\mbox{.}(2022)]%
        {jost2022cognitive}
\bibfield{author}{\bibinfo{person}{John~T Jost}, \bibinfo{person}{Delia~S Baldassarri}, {and} \bibinfo{person}{James~N Druckman}.} \bibinfo{year}{2022}\natexlab{}.
\newblock \showarticletitle{Cognitive--motivational mechanisms of political polarization in social-communicative contexts}.
\newblock \bibinfo{journal}{\emph{Nature Reviews Psychology}} \bibinfo{volume}{1}, \bibinfo{number}{10} (\bibinfo{year}{2022}), \bibinfo{pages}{560--576}.
\newblock


\bibitem[Jung et~al\mbox{.}(2019)]%
        {jung2019multidisciplinary}
\bibfield{author}{\bibinfo{person}{Jiin Jung}, \bibinfo{person}{Patrick Grim}, \bibinfo{person}{Daniel~J Singer}, \bibinfo{person}{Aaron Bramson}, \bibinfo{person}{William~J Berger}, \bibinfo{person}{Bennett Holman}, {and} \bibinfo{person}{Karen Kovaka}.} \bibinfo{year}{2019}\natexlab{}.
\newblock \showarticletitle{A multidisciplinary understanding of polarization.}
\newblock \bibinfo{journal}{\emph{American Psychologist}} \bibinfo{volume}{74}, \bibinfo{number}{3} (\bibinfo{year}{2019}), \bibinfo{pages}{301}.
\newblock


\bibitem[Kubin and von Sikorski(2021)]%
        {kubin2021role}
\bibfield{author}{\bibinfo{person}{Emily Kubin} {and} \bibinfo{person}{Christian von Sikorski}.} \bibinfo{year}{2021}\natexlab{}.
\newblock \showarticletitle{The role of (social) media in political polarization: a systematic review}.
\newblock \bibinfo{journal}{\emph{Annals of the International Communication Association}} \bibinfo{volume}{45}, \bibinfo{number}{3} (\bibinfo{year}{2021}), \bibinfo{pages}{188--206}.
\newblock


\bibitem[Le~Yaouanq(2018)]%
        {le2018model}
\bibfield{author}{\bibinfo{person}{Yves Le~Yaouanq}.} \bibinfo{year}{2018}\natexlab{}.
\newblock \bibinfo{booktitle}{\emph{A model of ideological thinking}}.
\newblock \bibinfo{type}{{T}echnical {R}eport}. \bibinfo{institution}{Discussion Paper}.
\newblock


\bibitem[Lees and Cikara(2020)]%
        {lees2020inaccurate}
\bibfield{author}{\bibinfo{person}{Jeffrey Lees} {and} \bibinfo{person}{Mina Cikara}.} \bibinfo{year}{2020}\natexlab{}.
\newblock \showarticletitle{Inaccurate group meta-perceptions drive negative out-group attributions in competitive contexts}.
\newblock \bibinfo{journal}{\emph{Nature human behaviour}} \bibinfo{volume}{4}, \bibinfo{number}{3} (\bibinfo{year}{2020}), \bibinfo{pages}{279--286}.
\newblock


\bibitem[Levendusky and Malhotra(2016)]%
        {levendusky2016mis}
\bibfield{author}{\bibinfo{person}{Matthew~S Levendusky} {and} \bibinfo{person}{Neil Malhotra}.} \bibinfo{year}{2016}\natexlab{}.
\newblock \showarticletitle{(Mis) perceptions of partisan polarization in the American public}.
\newblock \bibinfo{journal}{\emph{Public Opinion Quarterly}} \bibinfo{volume}{80}, \bibinfo{number}{S1} (\bibinfo{year}{2016}), \bibinfo{pages}{378--391}.
\newblock


\bibitem[Levy and Razin(2019)]%
        {levy2019echo}
\bibfield{author}{\bibinfo{person}{Gilat Levy} {and} \bibinfo{person}{Ronny Razin}.} \bibinfo{year}{2019}\natexlab{}.
\newblock \showarticletitle{Echo chambers and their effects on economic and political outcomes}.
\newblock \bibinfo{journal}{\emph{Annual Review of Economics}}  \bibinfo{volume}{11} (\bibinfo{year}{2019}), \bibinfo{pages}{303--328}.
\newblock


\bibitem[Li et~al\mbox{.}(2023)]%
        {li2023breaking}
\bibfield{author}{\bibinfo{person}{Zhenyang Li}, \bibinfo{person}{Yancheng Dong}, \bibinfo{person}{Chen Gao}, \bibinfo{person}{Yizhou Zhao}, \bibinfo{person}{Dong Li}, \bibinfo{person}{Jianye Hao}, \bibinfo{person}{Kai Zhang}, \bibinfo{person}{Yong Li}, {and} \bibinfo{person}{Zhi Wang}.} \bibinfo{year}{2023}\natexlab{}.
\newblock \showarticletitle{Breaking filter bubble: A reinforcement learning framework of controllable recommender system}. In \bibinfo{booktitle}{\emph{Proceedings of the ACM Web Conference 2023}}. \bibinfo{pages}{4041--4049}.
\newblock


\bibitem[Lim and Bentley(2022)]%
        {lim2022opinion}
\bibfield{author}{\bibinfo{person}{Soo~Ling Lim} {and} \bibinfo{person}{Peter~J Bentley}.} \bibinfo{year}{2022}\natexlab{}.
\newblock \showarticletitle{Opinion amplification causes extreme polarization in social networks}.
\newblock \bibinfo{journal}{\emph{Scientific Reports}} \bibinfo{volume}{12}, \bibinfo{number}{1} (\bibinfo{year}{2022}), \bibinfo{pages}{18131}.
\newblock


\bibitem[Lorenz-Spreen et~al\mbox{.}(2023)]%
        {lorenz2023systematic}
\bibfield{author}{\bibinfo{person}{Philipp Lorenz-Spreen}, \bibinfo{person}{Lisa Oswald}, \bibinfo{person}{Stephan Lewandowsky}, {and} \bibinfo{person}{Ralph Hertwig}.} \bibinfo{year}{2023}\natexlab{}.
\newblock \showarticletitle{A systematic review of worldwide causal and correlational evidence on digital media and democracy}.
\newblock \bibinfo{journal}{\emph{Nature human behaviour}} \bibinfo{volume}{7}, \bibinfo{number}{1} (\bibinfo{year}{2023}), \bibinfo{pages}{74--101}.
\newblock


\bibitem[Matthes et~al\mbox{.}(2018)]%
        {matthes2018spiral}
\bibfield{author}{\bibinfo{person}{J{\"o}rg Matthes}, \bibinfo{person}{Johannes Knoll}, {and} \bibinfo{person}{Christian von Sikorski}.} \bibinfo{year}{2018}\natexlab{}.
\newblock \showarticletitle{The “spiral of silence” revisited: A meta-analysis on the relationship between perceptions of opinion support and political opinion expression}.
\newblock \bibinfo{journal}{\emph{Communication Research}} \bibinfo{volume}{45}, \bibinfo{number}{1} (\bibinfo{year}{2018}), \bibinfo{pages}{3--33}.
\newblock


\bibitem[Mullainathan and Shleifer(2005)]%
        {mullainathan2005market}
\bibfield{author}{\bibinfo{person}{Sendhil Mullainathan} {and} \bibinfo{person}{Andrei Shleifer}.} \bibinfo{year}{2005}\natexlab{}.
\newblock \showarticletitle{The market for news}.
\newblock \bibinfo{journal}{\emph{American economic review}} \bibinfo{volume}{95}, \bibinfo{number}{4} (\bibinfo{year}{2005}), \bibinfo{pages}{1031--1053}.
\newblock


\bibitem[Musco et~al\mbox{.}(2018)]%
        {musco2018minimizing}
\bibfield{author}{\bibinfo{person}{Cameron Musco}, \bibinfo{person}{Christopher Musco}, {and} \bibinfo{person}{Charalampos~E Tsourakakis}.} \bibinfo{year}{2018}\natexlab{}.
\newblock \showarticletitle{Minimizing polarization and disagreement in social networks}. In \bibinfo{booktitle}{\emph{Proceedings of the 2018 world wide web conference}}. \bibinfo{pages}{369--378}.
\newblock


\bibitem[Nguyen et~al\mbox{.}(2014)]%
        {nguyen2014exploring}
\bibfield{author}{\bibinfo{person}{Tien~T Nguyen}, \bibinfo{person}{Pik-Mai Hui}, \bibinfo{person}{F~Maxwell Harper}, \bibinfo{person}{Loren Terveen}, {and} \bibinfo{person}{Joseph~A Konstan}.} \bibinfo{year}{2014}\natexlab{}.
\newblock \showarticletitle{Exploring the filter bubble: the effect of using recommender systems on content diversity}. In \bibinfo{booktitle}{\emph{Proceedings of the 23rd international conference on World wide web}}. \bibinfo{pages}{677--686}.
\newblock


\bibitem[Noelle-Neumann(1974)]%
        {noelle1974spiral}
\bibfield{author}{\bibinfo{person}{Elisabeth Noelle-Neumann}.} \bibinfo{year}{1974}\natexlab{}.
\newblock \showarticletitle{The spiral of silence a theory of public opinion}.
\newblock \bibinfo{journal}{\emph{Journal of communication}} \bibinfo{volume}{24}, \bibinfo{number}{2} (\bibinfo{year}{1974}), \bibinfo{pages}{43--51}.
\newblock


\bibitem[Nyhan et~al\mbox{.}(2023)]%
        {nyhan2023like}
\bibfield{author}{\bibinfo{person}{Brendan Nyhan}, \bibinfo{person}{Jaime Settle}, \bibinfo{person}{Emily Thorson}, \bibinfo{person}{Magdalena Wojcieszak}, \bibinfo{person}{Pablo Barber{\'a}}, \bibinfo{person}{Annie~Y Chen}, \bibinfo{person}{Hunt Allcott}, \bibinfo{person}{Taylor Brown}, \bibinfo{person}{Adriana Crespo-Tenorio}, \bibinfo{person}{Drew Dimmery}, {et~al\mbox{.}}} \bibinfo{year}{2023}\natexlab{}.
\newblock \showarticletitle{Like-minded sources on Facebook are prevalent but not polarizing}.
\newblock \bibinfo{journal}{\emph{Nature}} \bibinfo{volume}{620}, \bibinfo{number}{7972} (\bibinfo{year}{2023}), \bibinfo{pages}{137--144}.
\newblock


\bibitem[Papanastasiou(2020)]%
        {papanastasiou2020fake}
\bibfield{author}{\bibinfo{person}{Yiangos Papanastasiou}.} \bibinfo{year}{2020}\natexlab{}.
\newblock \showarticletitle{Fake news propagation and detection: A sequential model}.
\newblock \bibinfo{journal}{\emph{Management Science}} \bibinfo{volume}{66}, \bibinfo{number}{5} (\bibinfo{year}{2020}), \bibinfo{pages}{1826--1846}.
\newblock


\bibitem[Parapar and Radlinski(2021)]%
        {parapar2021diverse}
\bibfield{author}{\bibinfo{person}{Javier Parapar} {and} \bibinfo{person}{Filip Radlinski}.} \bibinfo{year}{2021}\natexlab{}.
\newblock \showarticletitle{Diverse user preference elicitation with multi-armed bandits}. In \bibinfo{booktitle}{\emph{Proceedings of the 14th ACM international conference on web search and data mining}}. \bibinfo{pages}{130--138}.
\newblock


\bibitem[Pariser(2011)]%
        {pariser2011filter}
\bibfield{author}{\bibinfo{person}{Eli Pariser}.} \bibinfo{year}{2011}\natexlab{}.
\newblock \bibinfo{booktitle}{\emph{The filter bubble: What the Internet is hiding from you}}.
\newblock \bibinfo{publisher}{penguin UK}.
\newblock


\bibitem[Peralta et~al\mbox{.}(2022)]%
        {peralta2022opinion}
\bibfield{author}{\bibinfo{person}{Antonio~F Peralta}, \bibinfo{person}{J{\'a}nos Kert{\'e}sz}, {and} \bibinfo{person}{Gerardo I{\~n}iguez}.} \bibinfo{year}{2022}\natexlab{}.
\newblock \showarticletitle{Opinion dynamics in social networks: From models to data}.
\newblock \bibinfo{journal}{\emph{arXiv preprint arXiv:2201.01322}} (\bibinfo{year}{2022}).
\newblock


\bibitem[Prior(2013)]%
        {prior2013media}
\bibfield{author}{\bibinfo{person}{Markus Prior}.} \bibinfo{year}{2013}\natexlab{}.
\newblock \showarticletitle{Media and political polarization}.
\newblock \bibinfo{journal}{\emph{Annual review of political science}} \bibinfo{volume}{16}, \bibinfo{number}{1} (\bibinfo{year}{2013}), \bibinfo{pages}{101--127}.
\newblock


\bibitem[Rainer and Krause(2002)]%
        {rainer2002opinion}
\bibfield{author}{\bibinfo{person}{Hegselmann Rainer} {and} \bibinfo{person}{Ulrich Krause}.} \bibinfo{year}{2002}\natexlab{}.
\newblock \showarticletitle{Opinion dynamics and bounded confidence: models, analysis and simulation}.
\newblock  (\bibinfo{year}{2002}).
\newblock


\bibitem[Ramaciotti~Morales and Cointet(2021)]%
        {ramaciotti2021auditing}
\bibfield{author}{\bibinfo{person}{Pedro Ramaciotti~Morales} {and} \bibinfo{person}{Jean-Philippe Cointet}.} \bibinfo{year}{2021}\natexlab{}.
\newblock \showarticletitle{Auditing the effect of social network recommendations on polarization in geometrical ideological spaces}. In \bibinfo{booktitle}{\emph{Proceedings of the 15th ACM Conference on Recommender Systems}}. \bibinfo{pages}{627--632}.
\newblock


\bibitem[Rathje et~al\mbox{.}(2021)]%
        {rathje2021out}
\bibfield{author}{\bibinfo{person}{Steve Rathje}, \bibinfo{person}{Jay~J Van~Bavel}, {and} \bibinfo{person}{Sander Van Der~Linden}.} \bibinfo{year}{2021}\natexlab{}.
\newblock \showarticletitle{Out-group animosity drives engagement on social media}.
\newblock \bibinfo{journal}{\emph{Proceedings of the National Academy of Sciences}} \bibinfo{volume}{118}, \bibinfo{number}{26} (\bibinfo{year}{2021}), \bibinfo{pages}{e2024292118}.
\newblock


\bibitem[Rendle et~al\mbox{.}(2012)]%
        {rendle2012bpr}
\bibfield{author}{\bibinfo{person}{Steffen Rendle}, \bibinfo{person}{Christoph Freudenthaler}, \bibinfo{person}{Zeno Gantner}, {and} \bibinfo{person}{Lars Schmidt-Thieme}.} \bibinfo{year}{2012}\natexlab{}.
\newblock \showarticletitle{BPR: Bayesian personalized ranking from implicit feedback}.
\newblock \bibinfo{journal}{\emph{arXiv preprint arXiv:1205.2618}} (\bibinfo{year}{2012}).
\newblock


\bibitem[Robert et~al\mbox{.}(1999)]%
        {robert1999monte}
\bibfield{author}{\bibinfo{person}{Christian~P Robert}, \bibinfo{person}{George Casella}, {and} \bibinfo{person}{George Casella}.} \bibinfo{year}{1999}\natexlab{}.
\newblock \bibinfo{booktitle}{\emph{Monte Carlo statistical methods}}. Vol.~\bibinfo{volume}{2}.
\newblock \bibinfo{publisher}{Springer}.
\newblock


\bibitem[Rodrigo-Gin{\'e}s et~al\mbox{.}(2023)]%
        {rodrigo2023systematic}
\bibfield{author}{\bibinfo{person}{Francisco-Javier Rodrigo-Gin{\'e}s}, \bibinfo{person}{Jorge Carrillo-de Albornoz}, {and} \bibinfo{person}{Laura Plaza}.} \bibinfo{year}{2023}\natexlab{}.
\newblock \showarticletitle{A systematic review on media bias detection: What is media bias, how it is expressed, and how to detect it}.
\newblock \bibinfo{journal}{\emph{Expert Systems with Applications}} (\bibinfo{year}{2023}), \bibinfo{pages}{121641}.
\newblock


\bibitem[Santos et~al\mbox{.}(2021)]%
        {santos2021link}
\bibfield{author}{\bibinfo{person}{Fernando~P Santos}, \bibinfo{person}{Yphtach Lelkes}, {and} \bibinfo{person}{Simon~A Levin}.} \bibinfo{year}{2021}\natexlab{}.
\newblock \showarticletitle{Link recommendation algorithms and dynamics of polarization in online social networks}.
\newblock \bibinfo{journal}{\emph{Proceedings of the National Academy of Sciences}} \bibinfo{volume}{118}, \bibinfo{number}{50} (\bibinfo{year}{2021}), \bibinfo{pages}{e2102141118}.
\newblock


\bibitem[Silva et~al\mbox{.}(2022)]%
        {silva2022multi}
\bibfield{author}{\bibinfo{person}{N{\'\i}collas Silva}, \bibinfo{person}{Heitor Werneck}, \bibinfo{person}{Thiago Silva}, \bibinfo{person}{Adriano~CM Pereira}, {and} \bibinfo{person}{Leonardo Rocha}.} \bibinfo{year}{2022}\natexlab{}.
\newblock \showarticletitle{Multi-armed bandits in recommendation systems: A survey of the state-of-the-art and future directions}.
\newblock \bibinfo{journal}{\emph{Expert Systems with Applications}}  \bibinfo{volume}{197} (\bibinfo{year}{2022}), \bibinfo{pages}{116669}.
\newblock


\bibitem[Singer et~al\mbox{.}(2019)]%
        {singer2019rational}
\bibfield{author}{\bibinfo{person}{Daniel~J Singer}, \bibinfo{person}{Aaron Bramson}, \bibinfo{person}{Patrick Grim}, \bibinfo{person}{Bennett Holman}, \bibinfo{person}{Jiin Jung}, \bibinfo{person}{Karen Kovaka}, \bibinfo{person}{Anika Ranginani}, {and} \bibinfo{person}{William~J Berger}.} \bibinfo{year}{2019}\natexlab{}.
\newblock \showarticletitle{Rational social and political polarization}.
\newblock \bibinfo{journal}{\emph{Philosophical Studies}}  \bibinfo{volume}{176} (\bibinfo{year}{2019}), \bibinfo{pages}{2243--2267}.
\newblock


\bibitem[Stray et~al\mbox{.}(2022)]%
        {stray2022building}
\bibfield{author}{\bibinfo{person}{Jonathan Stray}, \bibinfo{person}{Alon Halevy}, \bibinfo{person}{Parisa Assar}, \bibinfo{person}{Dylan Hadfield-Menell}, \bibinfo{person}{Craig Boutilier}, \bibinfo{person}{Amar Ashar}, \bibinfo{person}{Chloe Bakalar}, \bibinfo{person}{Lex Beattie}, \bibinfo{person}{Michael Ekstrand}, \bibinfo{person}{Claire Leibowicz}, {et~al\mbox{.}}} \bibinfo{year}{2022}\natexlab{}.
\newblock \showarticletitle{Building human values into recommender systems: An interdisciplinary synthesis}.
\newblock \bibinfo{journal}{\emph{ACM Transactions on Recommender Systems}} (\bibinfo{year}{2022}).
\newblock


\bibitem[Sunstein(1999)]%
        {sunstein1999law}
\bibfield{author}{\bibinfo{person}{Cass~R Sunstein}.} \bibinfo{year}{1999}\natexlab{}.
\newblock \showarticletitle{The law of group polarization}.
\newblock \bibinfo{journal}{\emph{University of Chicago Law School, John M. Olin Law \& Economics Working Paper}} \bibinfo{number}{91} (\bibinfo{year}{1999}).
\newblock


\bibitem[Sutton and Barto(2018)]%
        {sutton2018reinforcement}
\bibfield{author}{\bibinfo{person}{Richard~S Sutton} {and} \bibinfo{person}{Andrew~G Barto}.} \bibinfo{year}{2018}\natexlab{}.
\newblock \bibinfo{booktitle}{\emph{Reinforcement learning: An introduction}}.
\newblock \bibinfo{publisher}{MIT press}.
\newblock


\bibitem[Tucker et~al\mbox{.}(2018)]%
        {tucker2018social}
\bibfield{author}{\bibinfo{person}{Joshua~A Tucker}, \bibinfo{person}{Andrew Guess}, \bibinfo{person}{Pablo Barber{\'a}}, \bibinfo{person}{Cristian Vaccari}, \bibinfo{person}{Alexandra Siegel}, \bibinfo{person}{Sergey Sanovich}, \bibinfo{person}{Denis Stukal}, {and} \bibinfo{person}{Brendan Nyhan}.} \bibinfo{year}{2018}\natexlab{}.
\newblock \showarticletitle{Social media, political polarization, and political disinformation: A review of the scientific literature}.
\newblock \bibinfo{journal}{\emph{Political polarization, and political disinformation: a review of the scientific literature (March 19, 2018)}} (\bibinfo{year}{2018}).
\newblock


\bibitem[Van~Bavel et~al\mbox{.}(2021)]%
        {van2021social}
\bibfield{author}{\bibinfo{person}{Jay~J Van~Bavel}, \bibinfo{person}{Steve Rathje}, \bibinfo{person}{Elizabeth Harris}, \bibinfo{person}{Claire Robertson}, {and} \bibinfo{person}{Anni Sternisko}.} \bibinfo{year}{2021}\natexlab{}.
\newblock \showarticletitle{How social media shapes polarization}.
\newblock \bibinfo{journal}{\emph{Trends in Cognitive Sciences}} \bibinfo{volume}{25}, \bibinfo{number}{11} (\bibinfo{year}{2021}), \bibinfo{pages}{913--916}.
\newblock


\bibitem[Waller and Anderson(2021)]%
        {waller2021quantifying}
\bibfield{author}{\bibinfo{person}{Isaac Waller} {and} \bibinfo{person}{Ashton Anderson}.} \bibinfo{year}{2021}\natexlab{}.
\newblock \showarticletitle{Quantifying social organization and political polarization in online platforms}.
\newblock \bibinfo{journal}{\emph{Nature}} \bibinfo{volume}{600}, \bibinfo{number}{7888} (\bibinfo{year}{2021}), \bibinfo{pages}{264--268}.
\newblock


\bibitem[Xia et~al\mbox{.}(2011)]%
        {xia2011opinion}
\bibfield{author}{\bibinfo{person}{Haoxiang Xia}, \bibinfo{person}{Huili Wang}, {and} \bibinfo{person}{Zhaoguo Xuan}.} \bibinfo{year}{2011}\natexlab{}.
\newblock \showarticletitle{Opinion dynamics: A multidisciplinary review and perspective on future research}.
\newblock \bibinfo{journal}{\emph{International Journal of Knowledge and Systems Science (IJKSS)}} \bibinfo{volume}{2}, \bibinfo{number}{4} (\bibinfo{year}{2011}), \bibinfo{pages}{72--91}.
\newblock


\bibitem[Yarchi et~al\mbox{.}(2021)]%
        {yarchi2021political}
\bibfield{author}{\bibinfo{person}{Moran Yarchi}, \bibinfo{person}{Christian Baden}, {and} \bibinfo{person}{Neta Kligler-Vilenchik}.} \bibinfo{year}{2021}\natexlab{}.
\newblock \showarticletitle{Political polarization on the digital sphere: A cross-platform, over-time analysis of interactional, positional, and affective polarization on social media}.
\newblock \bibinfo{journal}{\emph{Political Communication}} \bibinfo{volume}{38}, \bibinfo{number}{1-2} (\bibinfo{year}{2021}), \bibinfo{pages}{98--139}.
\newblock


\bibitem[Yudkin et~al\mbox{.}(2019)]%
        {yudkin2019perception}
\bibfield{author}{\bibinfo{person}{Daniel Yudkin}, \bibinfo{person}{Stephen Hawkins}, {and} \bibinfo{person}{Tim Dixon}.} \bibinfo{year}{2019}\natexlab{}.
\newblock \showarticletitle{The perception gap: How false impressions are pulling Americans apart}.
\newblock  (\bibinfo{year}{2019}).
\newblock


\end{thebibliography}

%%
%% If your work has an appendix, this is the place to put it.
%\appendix
\newpage
\onecolumn
\newgeometry{left=4cm,right=4cm,marginparwidth=3.5cm,marginparsep=0.3cm}
\edef\marginnotetextwidth{\the\textwidth}
\renewcommand\marginfont{%
    \normalfont\scriptsize\itshape
}
\section{Appendix}

\subsection{Proof of Theorem \ref{theo:exp_bound}}

\noindent \textit{Theorem statement:} Let $v(o_{i})=o_{i}$ if $o_{i} \geqslant 0.5$ and $v(o_{i})=1-o_{i}$, if $o_{i} < 0.5$ and $o_{i} \sim f$ is drawn from any arbitrary prior opinion distribution with \texttt{p.d.f} $f$ and \texttt{c.d.f} F, and $\hat{v}_{o} = \mathrm{E}_{f}[v(o_{i})]$. Then, under the condition of equal support of approval and disapproval, i.e., $F(0.5)=0.5$, there exists an ex-ante symmetric Bayesian Nash equilibrium $\gamma_{i}^{*}=0$, $\forall i$, if an only if $\hat{v}_{o} < \frac{\alpha}{1-\hat{n}\cdot (1-\lambda_{in})}$  

\begin{proof}
\noindent \textit{Sketch:} The proof follows three steps: (i) we calculate the expected utility in ex-ante form for the players (Lemma \ref{lemma:exp_util}), (ii) we calculate the best response functions based on the ex-ante utilities, and (iii) we show that the symmetric intersection of the best response function at $\gamma=0$ exists only under the condition stated in the theorem.\\
\begin{lemma}
    The ex-ante utility of a representative player in a population with opinion distribution $o \sim f$ is 
    \begin{equation*}
        \mathrm{E}_{f}[u_{i}(\gamma_{i},\gamma_{-i})] 
= \hat{n} \cdot \mathrm{E}_{f}[v(o)] \cdot \lambda_{\text{in.}} \cdot \gamma_{i}^{(1-\gamma_{-i})}+ (1-\hat{n}) \cdot  \mathrm{E}_{f}[v(o)]  \cdot \lambda_{\text{out}}^{\gamma_{-i}} \cdot \gamma_{i} - \alpha\cdot \gamma_{i}
    \end{equation*}
    \label{lemma:exp_util}
\end{lemma}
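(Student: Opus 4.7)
The plan is a direct computation. The utility function $u_{i}(\gamma_{i},\gamma_{-i};o_{i})$ is piecewise in $o_i$: Equation \ref{eqn:utility} gives the approval form on $[0.5,1]$ and Equation \ref{eqn:utility_disappr} gives the disapproval form on $[0,0.5)$. Accordingly, I would first write
\begin{equation*}
\mathrm{E}_{f}[u_{i}(\gamma_{i},\gamma_{-i})] = \int_{0.5}^{1} u_{i}^{\text{appr}}(o)\,f(o)\,do + \int_{0}^{0.5} u_{i}^{\text{dis}}(o)\,f(o)\,do,
\end{equation*}
and pull out every factor that does not depend on $o$ (the constants $\hat{n}$, $1-\hat{n}$, $\lambda_{\text{in}}$, $\lambda_{\text{out}}^{\gamma_{-i}}$, $\gamma_i^{(1-\gamma_{-i})}$, $\gamma_i$, and $\alpha$), so that only two elementary moments remain, namely $\int_{0.5}^{1} o\,f(o)\,do$ and $\int_{0}^{0.5}(1-o)\,f(o)\,do$.

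The key simplifying observation is that by the definition of $v$, both of these moments are integrals of $v(o)\,f(o)$: on the upper half $v(o)=o$, and on the lower half $v(o)=1-o$. Hence their sum is exactly $\mathrm{E}_{f}[v(o)]$. Under the equal-support hypothesis $F(0.5)=0.5$, each of these two half-line integrals equals $\tfrac{1}{2}\mathrm{E}_{f}[v(o)]$, and the two groups carry equal prior mass. This symmetry is precisely what lets the approval coefficient $\hat{n}\lambda_{\text{in}}$ on the upper integral and the disapproval coefficient $(1-\hat{n})\lambda_{\text{in}}$ on the lower integral recombine into the single factor $\hat{n}\cdot\mathrm{E}_{f}[v(o)]\cdot\lambda_{\text{in}}$ multiplying $\gamma_{i}^{(1-\gamma_{-i})}$ in the final expression; the analogous collapse happens for the out-group term, producing $(1-\hat{n})\cdot\mathrm{E}_{f}[v(o)]\cdot\lambda_{\text{out}}^{\gamma_{-i}}\cdot\gamma_{i}$. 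The cost term $-\alpha\gamma_i$ integrates to itself because $f$ is a probability density.

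The main obstacle is bookkeeping rather than mathematics: one has to be careful that the in-group label attaches to the coefficient $\hat{n}$ for approval agents but to $1-\hat{n}$ for disapproval agents, so the two integrands look asymmetric before applying $F(0.5)=0.5$. I would therefore structure the calculation by grouping the four terms in pairs (the two in-group contributions and the two out-group contributions) and then applying the equal-support identity to each pair, which makes it transparent that the lemma's form is exactly what emerges. No regularity beyond integrability of $v(o)f(o)$ is needed, and the rest is routine arithmetic.
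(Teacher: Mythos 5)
Your overall route is the same as the paper's: split the expectation over $[0,0.5)$ and $[0.5,1]$, rewrite the two opinion moments as integrals of $v(o)f(o)$, and use the equal-support condition to recombine the in-group and out-group pairs into single coefficients times $\mathrm{E}_{f}[v(o)]$. However, one intermediate assertion is false: it is not true that $F(0.5)=0.5$ forces each half-line integral $\int_{0}^{0.5}v(o)f(o)\,do$ and $\int_{0.5}^{1}v(o)f(o)\,do$ to equal $\tfrac{1}{2}\mathrm{E}_{f}[v(o)]$ — that would require $f$ to be symmetric about $0.5$, not merely to put equal mass on each side (e.g.\ concentrate the lower-half mass near $o=0$ and the upper-half mass near $o=0.5$ and the two integrals differ). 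Fortunately this claim is not needed: the only consequence of $F(0.5)=0.5$ that the argument uses is $\hat{n}=1-\hat{n}=0.5$, which lets you write $\hat{n}\lambda_{\text{in}}I_{\text{upper}}+(1-\hat{n})\lambda_{\text{in}}I_{\text{lower}}=\hat{n}\lambda_{\text{in}}\bigl(I_{\text{upper}}+I_{\text{lower}}\bigr)=\hat{n}\lambda_{\text{in}}\mathrm{E}_{f}[v(o)]$ without ever knowing how $\mathrm{E}_{f}[v(o)]$ is apportioned between the two halves; the out-group pair collapses identically. This is exactly the step the paper takes (its marginal note reads ``since $\hat{n}=1-\hat{n}$ when $F(0.5)=0.5$''). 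Delete the half-integral claim and your proof is correct and coincides with the paper's.
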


The ex-ante utility is based on the expected utility of a random player in the game with opinion ($o \sim f$) drawn from an opinion distribution $f$. This utility can be calculated by taking expectation based on Eqn \ref{eqn:utility} and Eqn \ref{eqn:utility_disappr} as 
\begin{align}
\mathrm{E}_{f}[u_{i}(\gamma_{i},\gamma_{-i};o_{i})] &= \mathrm{E}_{f}[u_{i}(\gamma_{i},\gamma_{-i};o_{i}|o_{i} < 0.5)] +
\mathrm{E}_{f}[u_{i}(\gamma_{i},\gamma_{-i};o_{i}|o_{i} \geqslant 0.5)] \\
&= \int_{0}^{0.5} [(1-\hat{n}) \cdot (1-o) \cdot \lambda_{\text{in.}} \cdot \gamma_{i}^{(1-\gamma_{-i})}+ \hat{n} \cdot  (1-o)  \cdot \lambda_{\text{out}}^{\gamma_{-i}} \cdot \gamma_{i} - \alpha\cdot \gamma_{i} ] \cdot f(o) \, do \marginnote{Expanding the equations}\\
&\phantom{{}=1}+ \int_{0.5}^{1} [\hat{n} \cdot o \cdot \lambda_{\text{in.}} \cdot \gamma_{i}^{(1-\gamma_{-i})}+ (1-\hat{n}) \cdot  o  \cdot \lambda_{\text{out}}^{\gamma_{-i}} \cdot \gamma_{i} - \alpha\cdot \gamma_{i} ] \cdot f(o) \, do
\end{align}
Replacing $v(o)=o$ if $o \geqslant 0.5$ and $v(o)=1-o$, if $o < 0.5$ in the above equation, the expected utility becomes
\begin{align*}
\mathrm{E}_{f}[u_{i}(\gamma_{i},\gamma_{-i};o)] 
&= \int_{0}^{0.5} [(1-\hat{n}) \cdot v(o) \cdot \lambda_{\text{in.}} \cdot \gamma_{i}^{(1-\gamma_{-i})}+ \hat{n} \cdot  v(o)  \cdot \lambda_{\text{out}}^{\gamma_{-i}} \cdot \gamma_{i} - \alpha\cdot \gamma_{i} ] \cdot f(o) \, do \\
&\phantom{{}=1}+ \int_{0.5}^{1} [\hat{n} \cdot v(o) \cdot \lambda_{\text{in.}} \cdot \gamma_{i}^{(1-\gamma_{-i})}+ (1-\hat{n}) \cdot  v(o)  \cdot \lambda_{\text{out}}^{\gamma_{-i}} \cdot \gamma_{i} - \alpha\cdot \gamma_{i} ] \cdot f(o) \, do\\
&= ((1-\hat{n}) \cdot \lambda_{\text{in.}} \cdot \gamma_{i}^{(1-\gamma_{-i})}) \int_{0}^{0.5}  v(o)  \cdot f(o) \, do + (\hat{n} \cdot \lambda_{\text{in.}} \cdot \gamma_{i}^{(1-\gamma_{-i})}) \int_{0.5}^{1}  v(o)  \cdot f(o) \, do\\
&\phantom{{}=1}+ (\hat{n} \cdot \lambda_{\text{out}}^{\gamma_{-i}} \cdot \gamma_{i}) \int_{0}^{0.5}  v(o)  \cdot f(o) \, do + ((1-\hat{n}) \cdot \lambda_{\text{out}}^{\gamma_{-i}} \cdot \gamma_{i}) \int_{0.5}^{1}  v(o)  \cdot f(o) \, do \\
&\phantom{{}=2} - \alpha\cdot \gamma_{i}\int_{0}^{1} f(o) \, do \marginnote{Rearraging the terms}\\
\end{align*}
\begin{align*}
&= (\hat{n} \cdot \lambda_{\text{in.}} \cdot \gamma_{i}^{(1-\gamma_{-i})}) [\int_{0}^{0.5}  v(o)  \cdot f(o) \, do + \int_{0.5}^{1}  v(o)  \cdot f(o) \, do]\\
&\phantom{{}=1}+ ((1-\hat{n}) \cdot \lambda_{\text{out}}^{\gamma_{-i}} \cdot \gamma_{i}) [\int_{0}^{0.5}  v(o)  \cdot f(o) \, do + \int_{0.5}^{1}  v(o)  \cdot f(o) \, do] \marginnote{Since $\hat{n}=1-\hat{n}$ when $F(0.5)=0.5$}\\
&\phantom{{}=2} - \alpha\cdot \gamma_{i}  \marginnote{and $\int_{0}^{1} f(o) \, do = 1$}\\
\end{align*}
Substituting $\mathrm{E}_{f}[v(o)] = \int_{0}^{0.5}  v(o)  \cdot f(o) \, do + \int_{0.5}^{1}  v(o)  \cdot f(o) \, do$ in the above equation and rearranging the terms
\begin{align*}
\mathrm{E}_{f}[u_{i}(\gamma_{i},\gamma_{-i};o)] 
&= \hat{n} \cdot \mathrm{E}_{f}[v(o)] \cdot \lambda_{\text{in.}} \cdot \gamma_{i}^{(1-\gamma_{-i})}+ (1-\hat{n}) \cdot  \mathrm{E}_{f}[v(o)]  \cdot \lambda_{\text{out}}^{\gamma_{-i}} \cdot \gamma_{i} - \alpha\cdot \gamma_{i}
\end{align*}
\end{proof}
Note that the above equation of the utility in ex-ante form has the same structure as the utility in Eqn. \ref{eqn:utility}. Taking the derivative of the above function with respect to the strategic variable $\gamma_{i}$ and $\gamma_{-i}$ and solving the two set of equations:
\begin{align*}
   \frac{\partial}{\partial \gamma_{i}} \mathrm{E}_{f}[u_{i}(\gamma_{i},\gamma_{-i};o)] = 0 \, \,   \text{and} \, \, \frac{\partial}{\partial \gamma_{-i}} \mathrm{E}_{f}[u_{i}(\gamma_{i},\gamma_{-i};o)] = 0
\end{align*}
yields the two best response functions in ex-ante form as follows.
\begin{equation}
    BR_{i}(\gamma_{-i};o) = \min \left( 1, \max \left( 0, \left( \frac{ \hat{n} \cdot \mathrm{E}_{f}[v(o)] \cdot \lambda_{\text{in}} \cdot (1-\gamma_{-i})}{\alpha - (1-\hat{n})(\mathrm{E}_{f}[v(o)]\lambda_{\text{out}}^{\gamma_{-i}})} \right)^{\frac{1}{\gamma_{-i,}}}\right) \right)
    \label{eqn:ex_ante_br_a}
\end{equation}
\begin{equation}
    BR_{-i}(\gamma_{i};o_{-i}) = \min \left( 1, \max \left( 0, \left( \frac{ \hat{n} \cdot \mathrm{E}_{f}[v(o)] \cdot \lambda_{\text{in}} \cdot (1-\gamma_{i})}{\alpha - (1-\hat{n})(\mathrm{E}_{f}[v(o)]\lambda_{\text{out}}^{\gamma_{i}})} \right)^{\frac{1}{\gamma_{i,}}} \right) \right)   
    \label{eqn:ex_ante_br_b}
\end{equation}

For an ex-ante symmetric Bayesian Nash equilibrium $\gamma_{i}^{*}=0$, $\forall i$, to exist, the following condition needs to hold true.
$
BR_{i}(\gamma_{-i}=0;o_{i})=0
$
and 
$
BR_{-i}(\gamma_{i}=0;o_{-i})=0
$.
In the above best response function Eqn. \ref{eqn:ex_ante_br_a}, when $\gamma_{-i}=0$, the required equilibrium condition $BR_{i}(\gamma_{-i}=0;o_{i})=0$ holds when the numerator in the expression is less than the denominator. That is, the following condition has to be true:
\begin{align}
   \hat{n} \cdot \mathrm{E}_{f}[v(o)] \cdot \lambda_{\text{in}} <& \alpha - (1-\hat{n})(\mathrm{E}_{f}[v(o)]\lambda_{\text{out}}^{\gamma_{-i}} ) \\
   \mathrm{E}_{f}[v(o)] <& \frac{\alpha}{1-\hat{n}\cdot (1-\lambda_{in})} \marginnote{Rearraging the terms}
   \label{eqn:appdx_cond}
\end{align}
The above condition also follows from the second condition of the $\gamma_{i}^{*}=0$ equilibrium existence, that is, $BR_{-i}(\gamma_{i}=0;o_{-i})=0$. Therefore, we establish the sufficiency condition that if there is an ex-ante symmetric Bayesian Nash equilibrium $\gamma_{i}^{*}=0$, $\forall i$, the condition $\mathrm{E}_{f}[v(o)] < \frac{\alpha}{1-\hat{n}\cdot (1-\lambda_{in})}$ holds true. \par
Similarly, from the condition in Eqn \ref{eqn:appdx_cond} and when i. $\gamma_{-i} \rightarrow 0$, the following is true $BR_{i}(\gamma_{-i};o_{i})=0$ in Eqn \ref{eqn:ex_ante_br_a}, and ii. $\gamma_{i} \rightarrow 0$, the following is true $BR_{-i}(\gamma_{i};o_{-i})=0$ in Eqn \ref{eqn:ex_ante_br_b}. These two are the conditions for the equilibrium at $\gamma = 0$, and therefore establishes the necessary condition that if $\mathrm{E}_{f}[v(o)] < \frac{\alpha}{1-\hat{n}\cdot (1-\lambda_{in})}$ holds, there is an ex-ante symmetric Bayesian Nash equilibrium $\gamma^{*}=0$.

\subsection{Simulation code}

Code for running the simulations included in the paper can be found under \texttt{\url{https://github.com/atrisha/normative_stewarding}}
\restoregeometry

\end{document}